\def\lsoft{{l\kern-0.035cm\char39\kern-0.03truecm}}
\newtheorem{theorem}{Theorem}[section]
\newtheorem{proposition}[theorem]{Proposition}
\newtheorem{definition}[theorem]{Definition}
\newtheorem{lemma}[theorem]{Lemma}
\newtheorem{corollary}[theorem]{Corollary}
\newcommand{\qedsymb}{\hfill{\rule{2mm}{2mm}}}
\renewenvironment{proof}[1][]{\begin{trivlist}
\item[\hspace{\labelsep}{\bf\noindent Proof#1:\/}] }{\qedsymb\end{trivlist}}
\def\calF{{\cal F}}
\def\calS{{\cal S}}
\def\R{\mathbb{R}}
\newcommand{\Uncovered}{\textsc{Unfair-IS-Cycle}}
\newcommand{\SchrijverP}{\textsc{Schrijver}}
\newcommand{\KneserP}{\textsc{Kneser}}
\newcommand\Expec[2]{{{\bf E}_{#1}\left[ {#2} \right]}}
\newcommand{\NP}{\mathsf{NP}}
\newcommand{\PPA}{\mathsf{PPA}}
\newcommand{\TFNP}{\mathsf{TFNP}}
\newcommand{\LEAF}{\textsc{Leaf}}
\newcommand{\FISC}{\textsc{Fair-IS-Cycle}}
\newcommand{\CT}{\textsc{Cycle-Plus-Triangles}}
\renewcommand{\epsilon}{\varepsilon}
\newcommand{\Image}{\mathop{\mathrm{Im}}}
\newcommand{\Sset}{{\mathbb S}}
\begin{document}

\title{{\bf On Finding Constrained Independent Sets in Cycles}}
\author{
Ishay Haviv\thanks{School of Computer Science, The Academic College of Tel Aviv-Yaffo, Tel Aviv 61083, Israel. Research supported in part by the Israel Science Foundation (grant No.~1218/20).}
}

\date{}

\maketitle

\begin{abstract}
A subset of $[n] = \{1,2,\ldots,n\}$ is called stable if it forms an independent set in the cycle on the vertex set $[n]$.
In 1978, Schrijver proved via a topological argument that for all integers $n$ and $k$ with $n \geq 2k$, the family of stable $k$-subsets of $[n]$ cannot be covered by $n-2k+1$ intersecting families.
We study two total search problems whose totality relies on this result.

In the first problem, denoted by $\SchrijverP(n,k,m)$, we are given an access to a coloring of the stable $k$-subsets of $[n]$ with $m = m(n,k)$ colors, where $m \leq n-2k+1$, and the goal is to find a pair of disjoint subsets that are assigned the same color. While for $m = n-2k+1$ the problem is known to be $\PPA$-complete, we prove that for $m < d \cdot \lfloor \frac{n}{2k+d-2} \rfloor$, with $d$ being any fixed constant, the problem admits an efficient algorithm.
For $m = \lfloor n/2 \rfloor-2k+1$, we prove that the problem is efficiently reducible to the $\KneserP$ problem. Motivated by the relation between the problems, we investigate the family of {\em unstable} $k$-subsets of $[n]$, which might be of independent interest.

In the second problem, called Unfair Independent Set in Cycle, we are given $\ell$ subsets $V_1, \ldots, V_\ell$ of $[n]$, where $\ell \leq n-2k+1$ and $|V_i| \geq 2$ for all $i \in [\ell]$, and the goal is to find a stable $k$-subset $S$ of $[n]$ satisfying the constraints $|S \cap V_i| \leq |V_i|/2$ for $i \in [\ell]$.
We prove that the problem is $\PPA$-complete and that its restriction to instances with $n=3k$ is at least as hard as the Cycle plus Triangles problem, for which no efficient algorithm is known. On the contrary, we prove that there exists a constant $c$ for which the restriction of the problem to instances with $n \geq c \cdot k$ can be solved in polynomial time.
\end{abstract}

\section{Introduction}

For integers $n$ and $k$ with $n \geq 2k$, the Kneser graph $K(n,k)$ is the graph whose vertices are all the $k$-subsets of $[n]= \{1,2\ldots,n\}$, where two such sets are adjacent in the graph if they are disjoint. The graph $K(n,k)$ admits a proper vertex coloring with $n-2k+2$ colors. This indeed follows by assigning the color $i$, for each $i \in [n-2k+1]$, to all the vertices whose minimal element is $i$, and the color $n-2k+2$ to the remaining vertices, those contained in $[n] \setminus [n-2k+1]$. In 1978, Lov{\'{a}}sz~\cite{LovaszKneser} proved, settling a conjecture of Kneser~\cite{Kneser55}, that fewer colors do not suffice, that is, the chromatic number of the graph satisfies $\chi(K(n,k)) = n-2k+2$.
Soon later, Schrijver~\cite{SchrijverKneser78} strengthened Lov{\'{a}}sz's result by proving that the subgraph $S(n,k)$ of $K(n,k)$ induced by the stable $k$-subsets of $[n]$, i.e., the vertices of $K(n,k)$ that form independent sets in the cycle on the vertex set $[n]$, has the same chromatic number. It was further shown in~\cite{SchrijverKneser78} that the graph $S(n,k)$ is vertex-critical, in the sense that any removal of a vertex from the graph decreases its chromatic number.

It is interesting to mention that despite the combinatorial nature of Kneser's conjecture~\cite{Kneser55}, Lov{\'{a}}sz's proof~\cite{LovaszKneser} relies on the Borsuk--Ulam theorem~\cite{Borsuk33}, a fundamental result in the area of algebraic topology. Several alternative proofs and extensions were provided in the literature over the years (see, e.g.,~\cite{MatousekBook,MatousekZ04}). Although they are substantially different from each other, they all essentially rely on topological tools.

The computational search problem associated with Kneser graphs, denoted by $\KneserP$, was proposed by Deng, Feng, and Kulkarni~\cite{DengFK17} and is defined as follows.
Its input consists of integers $n$ and $k$ with $n \geq 2k$ and an access to a coloring of the vertices of $K(n,k)$ with $n-2k+1$ colors. The goal is to find a monochromatic edge in the graph, i.e., two disjoint $k$-subsets of $[n]$ that are assigned the same color by the given coloring. Since the number of colors used by the input coloring is strictly smaller than the chromatic number of $K(n,k)$~\cite{LovaszKneser}, it follows that this search problem is total, in the sense that every input is guaranteed to have a solution. Note that the input coloring may be given as an oracle access that provides the color of any queried vertex, and that an algorithm for the problem is considered efficient if its running time is polynomial in $n$. In other variants of the problem, the input coloring is given by some succinct representation, e.g., a Boolean circuit or an efficient Turing machine. The computational search problem $\SchrijverP$ is defined similarly, where the input represents a coloring of the vertices of $S(n,k)$ with $n-2k+1$ colors, and the goal is to find a monochromatic edge, whose existence is guaranteed by the aforementioned result of Schrijver~\cite{SchrijverKneser78}.

The computational complexity of the $\SchrijverP$ problem was determined in~\cite{Haviv22-FISC}, where it was shown to be complete in the complexity class $\PPA$.
This complexity class, introduced in 1994 by Papadimitriou~\cite{Papa94}, is known to capture the complexity of several additional total search problems whose totality is based on the Borsuk--Ulam theorem, e.g., Consensus Halving, Bisecting Sandwiches, and Splitting Necklaces~\cite{FG19}.
Note that this line of $\PPA$-completeness results is motivated not only from the computational complexity perspective, but also from a mathematical point of view, as one may find those results as an indication for the necessity of topological arguments in the existence proof of the solutions of these problems.
As for the $\KneserP$ problem, it is an open question whether it is also $\PPA$-hard, as was suggested by Deng et al.~\cite{DengFK17}.
We remark that its complexity is related to that of the Agreeable Set problem from the area of resource allocation (see~\cite{ManurangsiS19,Haviv22a}).
The $\KneserP$ and $\SchrijverP$ problems were also investigated in the framework of parameterized algorithms~\cite{Haviv22a,Haviv22b}, where it was shown that they admit randomized fixed-parameter algorithms with respect to the parameter $k$, namely, algorithms whose running time is $n^{O(1)} \cdot k^{O(k)}$ on input colorings of $K(n,k)$ and $S(n,k)$.

Before turning to our results, let us mention another computational search problem, referred to as the $\CT$ problem.
Its input consists of an integer $k$ and a graph on $3k$ vertices, whose edge set is the disjoint union of a Hamilton cycle and $k$ pairwise vertex-disjoint triangles. The goal is to find an independent set of size $k$ in the graph. The existence of a solution for every input of the problem follows from a result of Fleischner and Stiebitz~\cite{FleischnerS92}, which settled in the early nineties a conjecture of Du, Hsu, and Hwang~\cite{DHH93} as well as its strengthening by Erd{\"{o}}s~\cite{ErdosQuestions90}. Their proof in fact shows that every such graph is $3$-choosable, and thus $3$-colorable, so in particular, it contains an independent set of size $k$. Here, however, the existence of a solution for every input of the problem is known to follow from several different arguments. While the proof of~\cite{FleischnerS92} relies on the polynomial method in combinatorics (see also~\cite{AlonNull99}), an elementary proof was given slightly later by Sachs~\cite{Sachs93}, and another proof, based on the chromatic number of $S(n,k)$, was provided quite recently by Aharoni et al.~\cite{AABCKLZ17}. Yet, none of these proofs is constructive, in the sense that they do not suggest an efficient algorithm for the $\CT$ problem. The question of whether the problem admits an efficient algorithm was asked by several authors and is still open (see, e.g.,~\cite{FleischnerS97,AABCKLZ17,AlonFair22}).
Interestingly, the approach of~\cite{AABCKLZ17} implies that the problem is not harder than the restriction of the $\SchrijverP$ problem to colorings of $S(n,k)$ with $n=3k$.

\subsection{Our Contribution}

In this paper, we introduce two total search problems concerned with finding stable sets under certain constraints. The totality of the problems relies on the chromatic number of the graph $S(n,k)$~\cite{SchrijverKneser78}. We study these problems from algorithmic and computational perspectives. In what follows, we describe the two problems and our results on each of them.

\subsubsection{The Generalized Schrijver Problem}

We start by considering a generalized version of the $\SchrijverP$ problem, which allows the number of colors used by the input coloring to be any prescribed number.
Let $\SchrijverP(n,k,m)$ denote the problem which asks to find a monochromatic edge in $S(n,k)$ for an input coloring that uses $m = m(n,k)$ colors. Note that every input of the problem is guaranteed to have a solution whenever $m \leq n-2k+1$, and that for $m=n-2k+1$, the problem coincides with the standard $\SchrijverP$ problem.

The $\SchrijverP(n,k,m)$ problem obviously becomes easier as the number of colors $m$ decreases.
For example, it is not difficult to see that for $m = \lfloor n/k \rfloor-1$, the problem can be solved efficiently, in time polynomial in $n$.
Indeed, the clique number of the graph $S(n,k)$ is $\lfloor n/k \rfloor$, which is strictly larger than $m$, so by querying the input coloring for the colors of the vertices of a clique of maximum size, one can find two adjacent vertices with the same color. Our first result extends this observation and essentially shows that the $\SchrijverP(n,k,m)$ problem can be solved efficiently for any number of colors $m$ satisfying $m = O(n/k)$.

\begin{theorem}\label{thm:Sch_m_alpha}
For every integer $d \geq 2$, there exists an algorithm for the $\SchrijverP(n,k,m)$ problem with $m < d \cdot \lfloor \frac{n}{2k+d-2} \rfloor$ whose running time is $n^{O(d)}$.
\end{theorem}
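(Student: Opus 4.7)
The plan is to algorithmically exhibit a subgraph $H$ of $S(n,k)$ satisfying $\chi(H) > m$ and $|V(H)| = n^{O(d)}$, and then locate a monochromatic edge inside $H$ by exhaustive search. Set $t = \lfloor n/(2k+d-2) \rfloor$ and fix pairwise disjoint consecutive arcs $A_1, \ldots, A_t$ in $[n]$, each of length $2k+d-2$, none of which wraps around the cycle. Let $\mathcal{S}_i$ be the collection of stable $k$-subsets of $[n]$ whose elements all lie in $A_i$, and let $H$ be the subgraph of $S(n,k)$ induced by $\mathcal{S}_1 \cup \cdots \cup \mathcal{S}_t$.

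The key combinatorial claim is $\chi(H) \geq d \cdot t > m$. Since the arcs are pairwise disjoint subsets of $[n]$, any member of $\mathcal{S}_i$ and any member of $\mathcal{S}_j$ with $i \neq j$ are disjoint and hence adjacent in $S(n,k)$. This means $H$ is the join of $H[\mathcal{S}_1], \ldots, H[\mathcal{S}_t]$, so $\chi(H) = \sum_{i=1}^{t} \chi(H[\mathcal{S}_i])$. For each $i$, the graph $H[\mathcal{S}_i]$ is isomorphic to the disjointness graph of the stable $k$-subsets of the path on $[2k+d-2]$. That graph contains as an induced subgraph the Schrijver graph $S(2k+d-2,k)$, obtained by restricting to those $k$-subsets that also do not contain both endpoints of the path, since these are exactly the stable $k$-subsets of the cycle $C_{2k+d-2}$. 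Schrijver's theorem therefore yields $\chi(H[\mathcal{S}_i]) \geq \chi(S(2k+d-2,k)) = d$, which combined with the join identity gives the stated bound.

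For the algorithmic step, the number of stable $k$-subsets of an arc of length $2k+d-2$ equals $\binom{k+d-1}{d-1} = O(k^{d-1})$, so $|V(H)| \leq t \cdot \binom{k+d-1}{d-1} = n^{O(d)}$ for constant $d$. The algorithm enumerates the vertices of $H$, queries the input coloring on each, and scans all ordered pairs, returning the first one that is both disjoint and monochromatic. The chromatic estimate above guarantees that at least one such pair exists, and the running time is dominated by the $O(|V(H)|^2) = n^{O(d)}$ pairwise inspections.

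The main conceptual hurdle is the lower bound $\chi(H[\mathcal{S}_i]) \geq d$ on the local ``path Schrijver graph''; the argument above sidesteps a direct topological proof by exhibiting the cyclic Schrijver graph as an induced subgraph. Everything else—the choice of disjoint arcs, the additivity of $\chi$ under joins, and the brute-force search over $|V(H)|^2$ pairs—amounts to combinatorial bookkeeping, so the running time bound $n^{O(d)}$ follows immediately.
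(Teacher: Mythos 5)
Your proof is correct and takes essentially the same approach as the paper: partition $[n]$ into $t=\lfloor n/(2k+d-2)\rfloor$ disjoint arcs, invoke $\chi(S(2k+d-2,k))=d$ on each arc, and use the pigeonhole over $dt>m$ colors. The only cosmetic differences are that you take path-stable rather than cycle-stable $k$-subsets of each arc (then locate $S(2k+d-2,k)$ as an induced subgraph) and phrase the counting argument via additivity of $\chi$ under joins, whereas the paper works directly with cycle-stable subsets and a case split; both yield the same $n^{O(d)}$ bound.
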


Our next result relates the generalized $\SchrijverP(n,k,m)$ problem to the $\KneserP$ problem.
\begin{theorem}\label{thm:Schr<Kneser}
$\SchrijverP(n,k,\lfloor n/2 \rfloor-2k+1)$ is polynomial-time reducible to $\KneserP$.
\end{theorem}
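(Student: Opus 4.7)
The plan is to exhibit a disjointness-preserving injection from the vertex set of the Kneser graph $K(\lfloor n/2 \rfloor, k)$ into the vertex set of $S(n,k)$, and then to pull back the Schrijver coloring along this injection so that the resulting coloring of $K(\lfloor n/2 \rfloor, k)$ forms a legitimate $\KneserP$ instance with exactly the required color budget.

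Concretely, I would work with the map $\phi: 2^{[\lfloor n/2 \rfloor]} \to 2^{[n]}$ defined by $\phi(T) = \{2i : i \in T\}$, which places each element of $T$ at an even position in $[n]$. Since any two distinct elements of $\phi(T)$ differ by at least $2$, and since $\phi(T)$ never contains $1$ (while for even $n$ it may contain $n$, but then $1 \notin \phi(T)$), the set $\phi(T)$ has no two cyclically consecutive elements in $[n]$, so it is stable. The map $\phi$ is clearly injective and cardinality-preserving, and it satisfies $\phi(T_1) \cap \phi(T_2) = \emptyset$ iff $T_1 \cap T_2 = \emptyset$. Hence it carries $k$-subsets of $[\lfloor n/2 \rfloor]$ to stable $k$-subsets of $[n]$ and edges of $K(\lfloor n/2 \rfloor, k)$ to edges of $S(n,k)$.

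Given oracle access to a coloring $c$ of the vertices of $S(n,k)$ with $m = \lfloor n/2 \rfloor - 2k + 1$ colors, the reduction produces the $\KneserP$ instance on $K(n',k)$ with $n' = \lfloor n/2 \rfloor$ whose coloring $c'$ is defined by $c'(T) = c(\phi(T))$; this coloring is evaluable in polynomial time via a single oracle call to $c$. The budget matches exactly, since the standard $\KneserP$ problem on $K(n',k)$ allows $n' - 2k + 1 = m$ colors. Invoking the $\KneserP$ oracle yields disjoint $k$-subsets $T_1, T_2 \subseteq [n']$ with $c'(T_1) = c'(T_2)$, and then the stable $k$-subsets $\phi(T_1), \phi(T_2) \subseteq [n]$ are disjoint and receive the same color under $c$, that is, they form a monochromatic edge of $S(n,k)$ as required.

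I do not anticipate a serious obstacle here: the proof boils down to checking that $\phi$ really does land in stable sets for both parities of $n$ and that the color counts line up, both of which are immediate. The small case $\lfloor n/2 \rfloor < 2k$ can be dismissed since then $m \leq 0$ and no valid $\SchrijverP$ instance exists. The content of the theorem is therefore not technical depth but rather the structural observation that when the Schrijver color budget is halved, the resulting problem is absorbed by the (possibly easier) $\KneserP$ problem; this observation is what motivates the subsequent study of unstable $k$-subsets announced in the abstract.
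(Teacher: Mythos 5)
Your proof is correct, but it takes a genuinely different route from the paper's. The paper keeps the same parameters $n,k$ and \emph{extends} the given coloring $c$ of $S(n,k)$ to a coloring $c'$ of the full Kneser graph $K(n,k)$ with $n-2k+1$ colors: unstable $k$-subsets are colored by their smallest odd element (giving a proper $\lceil n/2 \rceil$-coloring of the subgraph $U(n,k)$ induced by unstable sets), and stable sets retain their color shifted by $\lceil n/2 \rceil$. Disjointness then forces any monochromatic edge of $c'$ to lie among the stable sets, hence to be a monochromatic edge of $c$. You instead go in the \emph{opposite} direction: rather than enlarging the graph, you shrink the Kneser graph, exhibiting the doubling map $\phi(T)=\{2i: i\in T\}$ as a disjointness-preserving embedding of $K(\lfloor n/2 \rfloor, k)$ into $S(n,k)$, and pulling the Schrijver coloring back along $\phi$; the color budget $\lfloor n/2 \rfloor - 2k + 1$ is then exactly the canonical budget for $K(\lfloor n/2 \rfloor, k)$. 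Both reductions are correct and polynomial-time. The paper's version is chosen because it naturally raises the question of whether $U(n,k)$ can be colored with fewer than $\lceil n/2 \rceil$ colors, which is precisely what motivates the paper's subsequent analysis of $\chi(U(n,k))$ (the answer being essentially no, by Corollary~\ref{cor:chiU(n,k)}); your embedding argument is self-contained and arguably more elementary, but it bypasses the structural question about unstable sets that the paper wants to surface.
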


The simple proof of Theorem~\ref{thm:Schr<Kneser} involves a proper coloring of the subgraph of $K(n,k)$ induced by the {\em unstable} $k$-subsets of $[n]$, i.e., the vertices of $K(n,k)$ that do not form vertices of $S(n,k)$. This graph, which we denote by $U(n,k)$, can be properly colored using $\lceil n/2 \rceil$ colors. Indeed, every unstable $k$-subset of $[n]$ includes an odd element, hence by assigning to each vertex of $U(n,k)$ some odd element that belongs to its set, we obtain a proper coloring of the graph with the desired number of colors. Since $U(n,k)$ is a subgraph of $K(n,k)$, it follows that for all admissible values of $n$ and $k$, we have $\chi(U(n,k)) \leq \min (n-2k+2, \lceil n/2 \rceil )$.

Motivated by the reduction given by Theorem~\ref{thm:Schr<Kneser}, we further explore the graph $U(n,k)$, whose study may be of independent interest.
We prove that the above upper bound on the chromatic number is essentially tight (up to an additive $1$ in certain cases; see Corollary~\ref{cor:chiU(n,k)} and the discussion that follows it).
The proof is topological and applies the Borsuk--Ulam theorem. We further determine the independence number of the graph $U(n,k)$ (see Theorem~\ref{thm:alphaU(n,k)}), using a structural result of Hilton and Milner~\cite{HM67} on the largest non-trivial intersecting families of $k$-subsets of $[n]$.

The motivation for Theorem~\ref{thm:Schr<Kneser} comes from the fact that the $\SchrijverP$ problem is known to be $\PPA$-hard, whereas no hardness result is known for the $\KneserP$ problem.
However, in a subsequent work we show that under plausible complexity assumptions it is unlikely that the $\SchrijverP(n,k,m)$ problem with $m = \lfloor n/2 \rfloor-2k+1$ is $\PPA$-hard.
Yet, it is of interest to figure out whether or not the problem admits an efficient algorithm.
While this challenge is left open, the following result shows that the problem is not harder than the restriction of the standard $\SchrijverP$ problem to colorings of $S(n,k)$ with $n=4k$.

\begin{theorem}\label{thm:Sch_4k}
If there exists a polynomial-time algorithm for the restriction of the $\SchrijverP$ problem to colorings of $S(n,k)$ with $n=4k$, then there exists a polynomial-time algorithm for the $\SchrijverP(n,k,m)$ problem where $m = \lfloor n/2 \rfloor -2k+1$.
\end{theorem}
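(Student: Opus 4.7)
The plan is to reduce $\SchrijverP(n, k, m)$ with $m = \lfloor n/2 \rfloor - 2k + 1$ to a single call of the hypothesized algorithm for $\SchrijverP$ on $S(4K, K)$, for a suitable choice of $K = K(n, k)$.

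Given the input $(n, k, c)$ with $c \colon V(S(n, k)) \to [m]$, I would set $K = \max\bigl(k, \lceil (m-1)/2 \rceil\bigr)$. This choice ensures simultaneously that $K \geq k$, that $2K + 1 \geq m$ (so that the derived coloring fits within the number of colors expected by the oracle on $S(4K, K)$), and that $4K \leq n$ (so that $[4K]$ can be identified with the first $4K$ elements of $[n]$). The verification of these three inequalities is routine from the definition of $m$ and the standing assumption $n \geq 4k$.

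Next, I would define the map $\phi \colon V(S(4K, K)) \to V(S(n, k))$ by $\phi(T) = \{t_1, \ldots, t_k\}$, where $t_1 < t_2 < \cdots < t_K$ are the elements of $T$ listed in increasing order. Since $\phi(T) \subseteq T$, the subset $\phi(T)$ inherits stability in the cycle $[4K]$; this translates to stability in the cycle $[n]$, because the only adjacency between elements of $[4K]$ in the $[4K]$-cycle that fails to be an adjacency in the $[n]$-cycle is the wrap-around pair $\{1, 4K\}$, which is already excluded by the stability of $T$. Defining $c' \colon V(S(4K, K)) \to [2K + 1]$ by $c'(T) = c(\phi(T))$, one invokes the hypothesized algorithm on $(4K, K, c')$ to obtain a monochromatic edge $(T_1, T_2)$ in $c'$, and then outputs $(\phi(T_1), \phi(T_2))$.

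The correctness is immediate: the disjointness of $T_1, T_2$ yields $\phi(T_1) \cap \phi(T_2) \subseteq T_1 \cap T_2 = \emptyset$, the images $\phi(T_1), \phi(T_2)$ are nonempty and hence distinct, and $c(\phi(T_1)) = c'(T_1) = c'(T_2) = c(\phi(T_2))$. The running time is polynomial in $n$, since $K = O(n)$, the hypothesized algorithm runs in time polynomial in $K$, and each query to $c'$ is answered by a single query to $c$. The main conceptual observation is that the map ``take the $k$ smallest elements of $T$'' is simultaneously stability- and disjointness-preserving, and this is what enables a single oracle call. I do not anticipate a serious technical obstacle beyond the bookkeeping required to verify the three inequalities governing the choice of $K$.
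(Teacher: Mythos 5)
Your proof is correct, and it takes a genuinely different route from the paper's. The paper proceeds by partitioning $[n]$ into $t = \lfloor n/4k \rfloor$ disjoint blocks of $4k$ consecutive elements, each inducing a copy of $S(4k,k)$ inside $S(n,k)$; it simulates the hypothesized algorithm $\mathsf{Algo}$ on each block, and either some simulation encounters at most $2k+1$ distinct colors (in which case $\mathsf{Algo}$ is guaranteed to return a valid monochromatic edge), or each block yields $2k+2$ distinctly colored vertices, and a pigeonhole argument (using $t(2k+2) > m$) produces two equally colored vertices in different blocks, which are automatically disjoint. Your approach instead makes a \emph{single} call to the hypothesized algorithm, on $S(4K,K)$ for a deliberately enlarged parameter $K$, by pulling the input coloring back through the ``$k$ smallest elements'' map $\phi$. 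The decisive observation you exploit, which the paper's proof does not, is that the hypothesis furnishes an algorithm for \emph{every} pair $(4k',k')$, so one is free to choose $k'$ large enough that $\chi(S(4k',k')) = 2k'+2 > m$, sidestepping the need for multiple simulations and the ensuing case analysis. The one technical point you are right to flag is that $\phi$ must simultaneously preserve disjointness and stability; the former is immediate from $\phi(T)\subseteq T$, and the latter holds because the cycle on $[4K]$ is strictly denser than the cycle on $[n]$ restricted to $[4K]$, so $[4K]$-stability is the stronger condition. Your approach also makes the running-time accounting slightly cleaner, since each query to $c'$ is answered by exactly one query to $c$, whereas the paper needs to argue about the total volume of queries across $t$ simulations. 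Both proofs are valid; yours is the more streamlined reduction.
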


We finally observe that the restriction of $\SchrijverP(n,k,m)$ with $m = \lfloor n/2 \rfloor-2k+1$ to instances satisfying $n = \Omega(k^4)$ admits an efficient randomized algorithm. This essentially follows from the fixed-parameter algorithm presented in~\cite{Haviv22b} (see Section~\ref{sec:Sch} for details).

\subsubsection{The Unfair Independent Set in Cycle Problem}

The second problem studied in this paper is the Unfair Independent Set in Cycle problem, denoted by $\Uncovered$ and defined as follows.
Its input consists of two integers $n$ and $k$ with $n \geq 2k$ and $\ell$ subsets $V_1, \ldots, V_{\ell}$ of $[n]$, where $\ell \leq n-2k+1$ and $|V_i| \geq 2$ for all $i \in [\ell]$. The goal is to find a stable $k$-subset $S$ of $[n]$ that satisfies the constraints $|S \cap V_i| \leq |V_i|/2$ for $i \in [\ell]$.
The name of the problem essentially borrows the terminology of~\cite{AABCKLZ17}, where a set is said to fairly represent a set $V_i$ if it includes at least roughly half of its elements, hence the desired stable set in the $\Uncovered$ problem is required to {\em unfairly} represent each of the given sets $V_i$.
It is not difficult to show, using the chromatic number of $S(n,k)$, that every input of the $\Uncovered$ problem has a solution (see Lemma~\ref{lemma:Unco_total}).
Note that the requirement that the input sets satisfy $|V_i| \geq 2$ for all $i \in [\ell]$ is discussed in Section~\ref{sec:comput_prob}.

It is natural to compare the definition of the $\Uncovered$ problem to that of the Fair Independent Set in Cycle problem, denoted by $\FISC$ and studied in~\cite{Haviv22-FISC} (see Definition~\ref{def:FISC}).
While the goal in the former is to find a stable subset of $[n]$ with a prescribed size $k$ that includes {\em no more} than half of the elements of each $V_i$, the goal in the latter is, roughly speaking, to find a stable subset of $[n]$, of an arbitrary size, that includes {\em at least} half of the elements of each $V_i$.
The specification of the size $k$ in the inputs of $\Uncovered$ makes the problem non-trivial and allows us to study it for various settings of the quantities $n$ and $k$.

The following result shows that the complexity of the $\Uncovered$ problem is perfectly captured by the class $\PPA$.
This is established using the $\SchrijverP$ and $\FISC$ problems which are $\PPA$-complete~\cite{Haviv22-FISC}.

\begin{theorem}\label{thm:Unc_PPA}
The $\Uncovered$ problem is $\PPA$-complete.
\end{theorem}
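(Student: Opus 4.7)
The theorem has two directions: containment $\Uncovered \in \PPA$ and $\PPA$-hardness. I plan to prove containment via a reduction from $\Uncovered$ to $\SchrijverP$, and $\PPA$-hardness via a reduction from $\FISC$ to $\Uncovered$. Both $\SchrijverP$ and $\FISC$ are $\PPA$-complete by~\cite{Haviv22-FISC}.

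For containment, given an $\Uncovered$ instance $(n, k, V_1, \ldots, V_\ell)$ with $\ell \leq n-2k+1$, I construct a coloring $c : S(n,k) \to \{1, \ldots, n-2k+1\}$ by assigning to each stable $k$-subset $S$ the smallest index $i \in [\ell]$ with $|S \cap V_i| > |V_i|/2$ if such an index exists, and the default color $1$ otherwise. Invoking the $\SchrijverP$ oracle on $c$ yields two disjoint stable $k$-subsets $S_1, S_2$ sharing a common color $i$. The critical observation is that since $S_1$ and $S_2$ are disjoint, $S_1 \cap V_i$ and $S_2 \cap V_i$ are disjoint subsets of $V_i$, so at most one of them can have size exceeding $|V_i|/2$. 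Consequently, at least one of $S_1, S_2$ must have received its color by the default rule, meaning that it satisfies every constraint of the $\Uncovered$ instance and is a valid solution.

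For the hardness direction, I plan to polynomially reduce $\FISC$ to $\Uncovered$ by exploiting the arithmetic duality $|S \cap V| \geq |V|/2 \iff |S \cap V^c| \leq |V^c|/2$, which becomes an exact correspondence when the solution size is half of the universe. The plan is to embed the $\FISC$ instance into an enlarged cycle $[N]$ obtained by padding $[n]$ with auxiliary elements so that the target $\Uncovered$ solution has size exactly $N/2$, to take the complements of the original constraints inside $[N]$ as the $\Uncovered$ constraints, and to verify that any stable $(N/2)$-subset of $[N]$ projects back to a stable subset of $[n]$ satisfying the original $\FISC$ constraints. Since $\FISC$ solutions may be of arbitrary size, the reduction will likely iterate over candidate sizes of the $\FISC$ solution and invoke $\Uncovered$ on each corresponding padded instance.

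The main obstacle will be the hardness step, specifically designing the padding so that stability survives the projection $[N] \to [n]$, the duality arithmetic balances exactly, and the $\Uncovered$ size bounds $\ell \leq N-2k+1$ and $|V'_i| \geq 2$ are simultaneously met. The decoding step—certifying that an $\Uncovered$ solution cannot be an artifact of the padding elements rather than a genuine $\FISC$ witness—will require the most care, as will ensuring that the whole reduction is polynomial time.
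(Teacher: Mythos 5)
Your containment argument is essentially the paper's: both reduce to $\SchrijverP$ by coloring each stable set with the index of the first violated constraint, and both exploit disjointness to conclude that one endpoint of a monochromatic edge must have received the default color (you use default $1$, the paper uses default $\ell$; both work).

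Your hardness argument, however, has a fatal parameter clash. You propose to pad $[n]$ to $[N]$ and set $k=N/2$ so that the duality $|S\cap V|\geq |V|/2 \iff |S\cap V^c|\leq |V^c|/2$ becomes exact. But the $\Uncovered$ problem requires the number of constraint sets to satisfy $\ell \leq N-2k+1$, and with $k=N/2$ this gives $\ell\leq 1$ — you cannot encode a $\FISC$ instance with $m$ parts at all. The slack is tiny: the duality needs $k\geq N/2-1$ to recover $|S\cap V|\geq |V|/2-1$, which forces $\ell\leq 3$, still useless for general $m$. The padding idea cannot be repaired within the stated parameter regime. (You also flag, correctly, that stability of $S\cap[n]$ in the $[n]$-cycle would be an issue since $1$ and $n$ are not adjacent in $[N]$, but the $\ell\leq N-2k+1$ bound is the deeper obstruction.) The paper avoids the duality and padding entirely: it reduces from the restriction of $\FISC$ to partitions into odd-sized parts $|V_i|\geq 3$ (known $\PPA$-hard), sets $k=(n-m)/2$, and passes the same sets $V_1,\ldots,V_m$ unchanged. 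Then any stable $k$-set $S$ with $|S\cap V_i|\leq |V_i|/2$ satisfies $|S\cap V_i|\leq (|V_i|-1)/2$ by oddness, and summing over the partition gives $k=|S|\leq\sum(|V_i|-1)/2=(n-m)/2=k$, forcing equality $|S\cap V_i|=(|V_i|-1)/2\geq |V_i|/2-1$ in every part. This counting argument is the idea you are missing: it converts the $\Uncovered$ upper bound into the $\FISC$ lower bound for free, with no complementation and no padding, and $\ell=m=n-2k\leq n-2k+1$ holds automatically.
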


We next consider some restrictions of the $\Uncovered$ problem to instances in which the integer $n$ is somewhat larger than $2k$.
On the one hand, the restriction of the problem to instances with $n=3k$ is at least as hard as the $\CT$ problem, for which no efficient algorithm is known (see Proposition~\ref{prop:CT<Un}). On the other hand, we prove that on instances whose ratio between $n$ and $k$ is above some absolute constant, the problem can be solved in polynomial time.

\begin{theorem}\label{thm:alg_c}
There exists a constant $c >0$, such that there exists a polynomial-time algorithm for the restriction of the $\Uncovered$ problem to instances with $n \geq c \cdot k$.
\end{theorem}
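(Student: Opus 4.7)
The plan is to combine block-wise random sampling with a preprocessing reduction and the Lov\'asz Local Lemma, followed by standard derandomization. First I would partition $[n]$ into $k$ consecutive arcs of length $\lfloor n/k\rfloor \ge c$ and, independently for each arc, pick a uniformly random interior element (avoiding the first and last position of the arc) as a member of $S$. The resulting set is automatically stable since interior elements of consecutive arcs lie at distance at least two in the cycle. For each constraint $V_i$, the quantity $|S\cap V_i|$ is then a sum of $k$ independent $\{0,1\}$-valued indicators of individual means at most $1/(c-2)$, and a standard Chernoff bound yields $\Pr[|S\cap V_i|>|V_i|/2]\le(C/c)^{|V_i|/2}$ for some absolute constant $C$ provided $c$ exceeds a fixed threshold; constraints with $|V_i|=\Omega(\log n)$ thus have polynomially small failure probability and are absorbed by a union bound over the $\ell\le n-2k+1$ constraints.

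The delicate case is constraints of bounded size, most notably the pair constraints of which there can be $\Theta(n)$. Here I would invoke the asymmetric Lov\'asz Local Lemma: the bad event $A_i$ depends only on the random choices in the at most $|V_i|$ arcs meeting $V_i$, and is mutually independent of events supported on disjoint collections of arcs. The main obstacle is the dependency degree, which a priori may blow up if a single element of $[n]$ participates in many constraints. To prevent this, I would preprocess by committing in advance to exclude from $S$ any element that appears in more than $\Theta(c)$ small constraints. Using $\ell\le n-2k+1$, only $O(n/c)$ elements need to be excluded, and for $c$ large enough the cycle minus these exclusions still supports a stable $k$-subset by the standard bound on independent sets in a disjoint union of paths. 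After this pruning, each element lies in $O(c)$ small constraints, so the dependency degree of every bad event is $O(c^2)$; this together with $\Pr[A_i]=O(1/c^2)$ for pair constraints satisfies the LLL condition once $c$ is taken sufficiently large.

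The constructive Moser--Tardos algorithm then turns the LLL existence into a polynomial-time randomized algorithm, and standard derandomization (the method of conditional expectations, or explicit deterministic LLL constructions) yields the deterministic polynomial-time algorithm claimed by the theorem. The technically subtle step is coupling the preprocessing with the block-sampling, since the excluded elements perturb the regular arc structure; this can be handled either by randomly cyclically shifting the arc partition or by re-partitioning the surviving ground set after the exclusions.
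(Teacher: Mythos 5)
Your approach is genuinely different from the paper's, and as written it has gaps that are more than cosmetic. The paper's proof uses the \emph{deletion method} (alteration) with plain linearity of expectation: sample each element of $[n]$ independently with probability $p=2k/n$, delete one element from every consecutive pair to restore stability, and then delete arbitrary excess elements from each $V_i$ that is over-covered. The expected size lost to stability repair is $p^2n$, and the expected size lost to repairing constraint $i$ is bounded by the expected number of $(\lfloor r_i/2\rfloor+1)$-subsets of $V_i$ landing in the sample, at most $(4p)^{\lfloor r_i/2\rfloor+1}\le (4p)^2$. Summing over $\ell\le n$ constraints and comparing with the expected initial size $pn=2k$ gives $\Expec{}{|A''|}\ge (p-17p^2)n\ge k$ once $n\ge 68k$. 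Derandomization is a single pass of conditional expectations over the indicator vector, using the fact that the potential $f(S)$ in equation~\eqref{eq:f} has an efficiently computable conditional expectation. No Local Lemma, no preprocessing, no case split by constraint size is needed, because over-covered constraints are simply repaired rather than avoided.

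Concerning your proposal, three places need repair. First, your Local Lemma condition does not improve as $c\to\infty$: with a preprocessing threshold of $\Theta(c)$, each of the $O(c)$-element arcs in the support of a pair-constraint event meets $O(c)\cdot\Theta(c)=\Theta(c^2)$ other small-constraint events, so the dependency degree $d=\Theta(c^2)$ scales exactly as $1/\Pr[A_i]=\Theta(c^2)$; the product $e\,p\,(d+1)$ is a constant independent of $c$, and you cannot conclude it is $<1$ merely by taking $c$ large. Choosing a sublinear threshold (e.g.\ $T=\sqrt{c}$) would fix this, but that is not what is written. Second, constraints of intermediate size, between $3$ and $\Theta(\log n)$, are covered neither by the union bound nor by the symmetric pair-constraint LLL estimate; you would need the asymmetric LLL with weights depending on $|V_i|$, and this interacts nontrivially with the preprocessing bound, which is stated per element rather than per arc. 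Third, the coupling you defer to the end is a real obstacle: after deleting $O(n/c)$ elements the remaining ground set is a disjoint union of paths of unequal lengths, and the one-element-per-arc sampler no longer produces a stable set of the cycle without a nontrivial re-partitioning argument; a random cyclic shift does not resolve it either, since the excluded set is adversarial. The alteration approach sidesteps all three issues, and is why the paper can obtain an explicit constant ($c=68$) with a short proof.
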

\noindent
The proof of Theorem~\ref{thm:alg_c} is based on a probabilistic argument with alterations, which is derandomized into a deterministic algorithm using the method of conditional expectations (see, e.g.,~\cite[Chapters~3~and~16.1]{AlonS16}).
The approach is inspired by a probabilistic argument of Kiselev and Kupavskii~\cite{KiselevK20}, who proved that for $n \geq (2+o(1)) \cdot k^2$, every proper coloring of the Kneser graph $K(n,k)$ with $n-2k+2$ colors has a trivial color class (all of whose members share a common element).

\subsection{Outline}
The rest of the paper is organized as follows.
In Section~\ref{sec:preliminaries}, we collect some definitions and results that will be used throughout the paper.
In Section~\ref{sec:Sch}, we study the generalized $\SchrijverP$ problem and prove Theorems~\ref{thm:Sch_m_alpha},~\ref{thm:Schr<Kneser}, and~\ref{thm:Sch_4k}.
In Section~\ref{sec:Constrained}, we study the $\Uncovered$ problem and prove Theorems~\ref{thm:Unc_PPA} and~\ref{thm:alg_c}.
Finally, in Section~\ref{sec:unstable}, we consider the family of unstable $k$-subsets of $[n]$ and study the chromatic and independence numbers of the graph $U(n,k)$.

\section{Preliminaries}\label{sec:preliminaries}

\subsection{Kneser and Schrijver Graphs}\label{sec:KS_graphs}

For integers $n$ and $k$, let $\binom{[n]}{k}$ denote the family of all $k$-subsets of $[n]$.
A subset of $[n]$ is called {\em stable} if it does not include two consecutive elements nor both $1$ and $n$, equivalently, it forms an independent set in the cycle on the vertex set $[n]$ with the natural order along the cycle. Otherwise, the set is called {\em unstable}. The family of stable $k$-subsets of $[n]$ is denoted by $\binom{[n]}{k}_{\mathrm{stab}}$. The Kneser graph and the Schrijver graph are defined as follows.

\begin{definition}
For integers $n$ and $k$ with $n \geq 2k$, the {\em Kneser graph} $K(n,k)$ is the graph on the vertex set $\binom{[n]}{k}$, where two sets $A,B \in \binom{[n]}{k}$ are adjacent if they satisfy $A \cap B = \emptyset$. The {\em Schrijver graph} $S(n,k)$ is the subgraph of $K(n,k)$ induced by the vertices of $\binom{[n]}{k}_{\mathrm{stab}}$.
\end{definition}

Obviously, the number of vertices in $K(n,k)$ is $\binom{n}{k}$. The number of vertices in $S(n,k)$ is given by the following lemma (see, e.g.,~\cite[Fact~4.1]{Haviv22a}).
\begin{lemma}\label{lemma:|S(n,k)|}
For all integers $n$ and $k$ with $n \geq 2k$, the number of stable $k$-subsets of $[n]$ is
$\frac{n}{k} \cdot \binom{n-k-1}{k-1}$.
\end{lemma}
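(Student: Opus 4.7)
The plan is to reduce the cycle count to a path count by conditioning on whether the element $1$ belongs to the stable set $S$. A subset is stable in the cycle precisely when it is an independent set in the path $1,2,\ldots,n$ that does not contain both endpoints $1$ and $n$. So I would partition the stable $k$-subsets into those that include $1$ and those that do not.

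First I would establish the standard path lemma: the number of independent sets of size $j$ in the path on $m$ vertices is $\binom{m-j+1}{j}$. This can be shown by a stars-and-bars bijection that maps each independent set $\{a_1 < \cdots < a_j\}$ to $\{a_1, a_2-1, \ldots, a_j-(j-1)\} \subseteq [m-j+1]$, collapsing the forbidden "gaps" between consecutive chosen elements. Applying this twice:
\begin{itemize}
\item If $1 \in S$, then $2,n \notin S$ and $S \setminus \{1\}$ is an independent set of size $k-1$ in the path on the $n-3$ vertices $\{3,4,\ldots,n-1\}$, giving $\binom{(n-3)-(k-1)+1}{k-1} = \binom{n-k-1}{k-1}$ choices.
\item If $1 \notin S$, then $S$ is an independent set of size $k$ in the path on the $n-1$ vertices $\{2,3,\ldots,n\}$, giving $\binom{(n-1)-k+1}{k} = \binom{n-k}{k}$ choices.
\end{itemize}

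To combine, I would use the ratio $\binom{n-k}{k} = \frac{n-k}{k}\binom{n-k-1}{k-1}$, which is immediate from the factorial expressions. Then
\[
\binom{n-k-1}{k-1} + \binom{n-k}{k} = \binom{n-k-1}{k-1} \cdot \left(1 + \frac{n-k}{k}\right) = \frac{n}{k}\binom{n-k-1}{k-1},
\]
which is the claimed count. There is no real obstacle here: the entire argument is a routine case split followed by a one-line identity, and the only external ingredient is the classical formula for independent sets in a path, whose proof via the shift bijection is itself elementary.
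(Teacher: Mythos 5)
Your argument is correct, and it is essentially \emph{the} standard way to count independent sets in a cycle: split on whether the distinguished vertex $1$ is present, reduce each case to a path, and apply the classical shift bijection $\{a_1<\cdots<a_j\}\mapsto\{a_1,a_2-1,\ldots,a_j-(j-1)\}$ to get $\binom{m-j+1}{j}$ for paths. The two cases give $\binom{n-k-1}{k-1}$ and $\binom{n-k}{k}$, and the recombination via $\binom{n-k}{k}=\frac{n-k}{k}\binom{n-k-1}{k-1}$ is right. For the record, the paper does not supply its own proof of this lemma — it cites it as a known fact (Fact~4.1 of an earlier paper by the same author) — so there is nothing in the present text to compare against; but your derivation is exactly what one expects that reference to contain. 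One cosmetic note: the formula as written presupposes $k\geq 1$ (the factor $n/k$ is undefined otherwise), which is implicitly assumed throughout the paper since we are dealing with $k$-subsets forming vertices of a nonempty Schrijver graph; your case analysis also implicitly assumes $n\geq 3$ so that the set $\{3,\ldots,n-1\}$ makes sense, which again is harmless given $n\geq 2k\geq 2$ and the degenerate cases are easily checked directly.
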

\noindent
With Lemma~\ref{lemma:|S(n,k)|} at hand, one can derive the following (see also~\cite[Lemma~1]{Talbot03}).
\begin{lemma}\label{lemma:Talbot}
For all integers $n$ and $k$ with $n \geq 2k$ and for every $i \in [n]$, the number of stable $k$-subsets of $[n]$ that include $i$ is
$\binom{n-k-1}{k-1}$.
\end{lemma}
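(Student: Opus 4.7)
The plan is a straightforward double counting argument combined with the cyclic symmetry of the Schrijver graph, using Lemma~\ref{lemma:|S(n,k)|} as a black box.

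First, I would observe that the definition of stability is invariant under the cyclic shift $j \mapsto j+1 \pmod{n}$ on $[n]$: a set $A \subseteq [n]$ avoids consecutive elements (and avoids having both $1$ and $n$) if and only if its cyclic shift does. Consequently, for any two elements $i, i' \in [n]$, the map on $\binom{[n]}{k}_{\mathrm{stab}}$ given by shifting coordinates by $i' - i$ is a bijection between the stable $k$-subsets containing $i$ and those containing $i'$. Hence the quantity $x := |\{S \in \binom{[n]}{k}_{\mathrm{stab}} : i \in S\}|$ does not depend on $i$.

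Next, I would double count the pairs $(i, S)$ with $i \in S \in \binom{[n]}{k}_{\mathrm{stab}}$. Summing over $i$ gives $n \cdot x$ by the previous paragraph, while summing over $S$ gives $k \cdot |\binom{[n]}{k}_{\mathrm{stab}}|$, since each stable $k$-subset contributes exactly $k$ such pairs. Applying Lemma~\ref{lemma:|S(n,k)|} to evaluate the number of stable $k$-subsets, this yields
\[
n \cdot x \;=\; k \cdot \frac{n}{k} \cdot \binom{n-k-1}{k-1} \;=\; n \cdot \binom{n-k-1}{k-1},
\]
and dividing by $n$ gives $x = \binom{n-k-1}{k-1}$, as required.

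There is no real obstacle here; the only thing to be slightly careful about is verifying that the cyclic shift genuinely preserves stability, including the wrap-around condition that forbids containing both $1$ and $n$. Since stability is equivalent to forming an independent set in the cycle $C_n$, and the cyclic shift is an automorphism of $C_n$, this is immediate.
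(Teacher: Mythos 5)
Your proof is correct and is exactly the derivation the paper implicitly has in mind: the paper offers no explicit argument, merely remarking that the lemma ``can be derived'' from Lemma~\ref{lemma:|S(n,k)|} (with a pointer to Talbot), and your combination of cyclic symmetry with double counting is the standard way to make that derivation precise.
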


As usual, we denote the independence number of a graph $G$ by $\alpha(G)$, and its chromatic number by $\chi(G)$.
The chromatic numbers of $K(n,k)$ and $S(n,k)$ were determined, respectively, by Lov{\'{a}}sz~\cite{LovaszKneser} and by Schrijver~\cite{SchrijverKneser78}, as stated below.
\begin{theorem}[{\cite{LovaszKneser,SchrijverKneser78}}]\label{thm:chi_KorS(n,k)}
For all integers $n$ and $k$ with $n \geq 2k$, $\chi(K(n,k)) = \chi(S(n,k)) = n-2k+2$.
\end{theorem}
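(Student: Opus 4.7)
The plan is to establish the chain of inequalities $n-2k+2 \leq \chi(S(n,k)) \leq \chi(K(n,k)) \leq n-2k+2$, in which the middle inequality is immediate from $S(n,k)$ being an induced subgraph of $K(n,k)$, so that only the two outer inequalities require argument.

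For the upper bound $\chi(K(n,k)) \leq n-2k+2$, I would verify the explicit coloring already described in the introduction. Assign colour $\min(A)$ to every $k$-subset $A \subseteq [n]$ with $\min(A) \leq n-2k+1$, and assign colour $n-2k+2$ to every other $k$-subset. The first $n-2k+1$ colour classes are intersecting because their members share a common minimum element. The final class is contained in $\binom{\{n-2k+2,\ldots,n\}}{k}$, i.e., it is a family of $k$-subsets of a ground set of size $2k-1$, and is therefore intersecting by pigeonhole. This shows that the coloring is proper on $K(n,k)$, and it restricts to a proper coloring of $S(n,k)$ as well.

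For the lower bound $\chi(S(n,k)) \geq n-2k+2$, I would follow the topological approach initiated by Lov\'asz and refined by B\'ar\'any, together with Schrijver's argument for the stable refinement. Set $d = n-2k$ and suppose for contradiction that there is a proper colouring $c:\binom{[n]}{k}_{\mathrm{stab}}\to[d+1]$. Embed $[n]$ as points $v_1,\ldots,v_n \in S^d$, chosen via Gale's lemma in an arrangement compatible with the cyclic order on $[n]$, so that every open hemisphere $H(x) = \{i : \langle v_i, x\rangle > 0\}$ contains a stable $k$-subset of indices. For $i \in [d+1]$, define $A_i = \{x \in S^d : \text{some stable $k$-subset of $H(x)$ receives colour $i$}\}$, yielding a covering of $S^d$ by $d+1$ sets (together with a low-dimensional exceptional boundary set controlled by general position). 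The Lusternik--Schnirelmann formulation of the Borsuk--Ulam theorem then forces some $A_i$ to contain an antipodal pair $x,-x$, which exhibits two disjoint stable $k$-subsets of the same colour (one in $H(x)$ and one in $H(-x)$), contradicting properness of $c$.

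The main obstacle is the geometric step requiring that every open hemisphere contain a stable $k$-subset, which is precisely what distinguishes Schrijver's strengthening from B\'ar\'any's argument for the Kneser graph. It amounts to choosing an embedding $v_1,\ldots,v_n$ that respects the cyclic structure underlying stability, and to handling carefully the measure-zero boundary set where $\langle v_i, x\rangle = 0$ for some $i$. Once this is in place the topological step is standard and yields the claimed bound.
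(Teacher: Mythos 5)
The paper does not prove this statement itself: Theorem~\ref{thm:chi_KorS(n,k)} is quoted as a known result with a citation to Lov\'asz and Schrijver, so there is no in-paper proof to compare against. What you have written is a correct high-level outline of the classical argument in those sources. The upper bound via the coloring $A \mapsto \min(\min(A),\, n-2k+2)$ is exactly right, and your observation that the last color class lives inside a ground set of size $2k-1$ and is therefore intersecting by pigeonhole closes that direction cleanly. For the lower bound, the moment-curve/Gale-lemma embedding of $[n]$ into $\Sset^{n-2k}$ so that every open hemisphere captures a \emph{stable} $k$-subset, followed by the Lusternik--Schnirelmann form of Borsuk--Ulam, is precisely Schrijver's refinement of B\'ar\'any's proof, and it is also the same toolkit the paper itself deploys in Lemma~\ref{lemma:y_i_tilde_U} and Theorem~\ref{thm:chi_tilde_U} for the analogous bound on $\chi(\widetilde{U}(n,k))$. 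One small imprecision in your write-up: once Gale's lemma guarantees that every open hemisphere contains a stable $k$-set, the sets $A_1,\ldots,A_{d+1}$ already cover the whole sphere, so there is no residual ``low-dimensional exceptional boundary set'' to worry about. Alternatively, one can do without that guarantee and instead add one closed set $A_{d+2} = \Sset^d \setminus (A_1\cup\cdots\cup A_{d+1})$ as the paper does in its proof of Theorem~\ref{thm:chi_tilde_U}, but then a different parameter count is needed; the two framings should not be mixed. This is a cosmetic issue, not a gap in the argument.
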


\subsection{Intersecting Families}

A family $\calF$ of sets is called {\em intersecting} if for every two sets $A,B \in \calF$ it holds that $A \cap B \neq \emptyset$.
Note that a family of $k$-subsets of $[n]$ is intersecting if and only if it forms an independent set in the graph $K(n,k)$.
An intersecting family $\calF$ is said to be {\em trivial} if there exists an element that belongs to all members of $\calF$. Otherwise, the family $\calF$ is {\em non-trivial}.
The famous Erd{\"{o}}s-Ko-Rado theorem~\cite{EKR61} asserts that the largest size of an intersecting family of $k$-subsets of $[n]$ is $\binom{n-1}{k-1}$, which is attained by the maximal trivial intersecting families.
The following result of Hilton and Milner~\cite{HM67} determines the largest size of a non-trivial intersecting family in this setting and characterizes the extremal families attaining it.
\begin{theorem}[Hilton--Milner Theorem~\cite{HM67}]\label{thm:HM}
For integers $k \geq 3$ and $n \geq 2k$, let $\calF \subseteq \binom{[n]}{k}$ be a non-trivial intersecting family.
Then,
\[ |\calF| \leq \binom{n-1}{k-1} - \binom{n-k-1}{k-1}+1.\]
Moreover, if $n>2k$ then equality holds if and only if there exist an element $i \in [n]$ and a $k$-subset $A$ of $[n]$ with $i \notin A$ such that
$\calF = \Big \{ F \in \binom{[n]}{k}~ \Big{|}~i \in F,~F \cap A \neq \emptyset \Big \} \cup \{A\}$,
or $k=3$ and there exists a $3$-subset $A$ of $[n]$ such that
$\calF = \Big \{F \in \binom{[n]}{3}~\Big{|}~|F \cap A| \geq 2 \Big \}$.
\end{theorem}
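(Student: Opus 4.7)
The plan is to establish the inequality via the shifting (compression) technique and then handle the equality characterization by tracking extremal configurations. For each pair $1 \le i < j \le n$, define the shift $S_{ij}$ on a family $\calF \subseteq \binom{[n]}{k}$ by replacing each $F \in \calF$ with $j \in F$, $i \notin F$ by $(F \setminus \{j\}) \cup \{i\}$, unless the result already lies in $\calF$. This preserves both $|\calF|$ and the intersecting property. Iterating until no further change occurs yields a left-compressed family $\calF^\star$ with $|\calF^\star| = |\calF|$.

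A well-known subtlety is that $\calF^\star$ can become trivial even when $\calF$ is not. I would address this by isolating the last shift which destroyed non-triviality and bounding the pre-shift family directly: a non-trivial intersecting family contained in a star is forced to omit at least $\binom{n-k-1}{k-1} - 1$ of its members, which already yields the claimed bound. When $\calF^\star$ is still non-trivial, split it as $\calA = \{F \in \calF^\star : 1 \in F\}$ and $\calB = \{F \in \calF^\star : 1 \notin F\}$, with $\calB \neq \emptyset$. Fixing any $B \in \calB$, each $A \in \calA$ must meet $B$, so counting $k$-subsets through $1$ that meet $B$ gives $|\calA| \le \binom{n-1}{k-1} - \binom{n-k-1}{k-1}$. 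The shifted condition confines the members of $\calB$ to an initial segment of $\{2,\ldots,n\}$, and an EKR-style argument on $\calB$, combined with the requirement that every element of $\calB$ meet every element of $\calA$, bounds $|\calB|$ by an amount that combines cleanly with the $\calA$-bound to give the desired total of $\binom{n-1}{k-1} - \binom{n-k-1}{k-1} + 1$.

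The main obstacle, which I expect to be the most delicate part, is the equality characterization. For $k \ge 4$ and $n > 2k$, I would show that any shifted extremal family has exactly the form $\{F : 1 \in F,\, F \cap A \neq \emptyset\} \cup \{A\}$ for some $A$ with $1 \notin A$, and then invert the shifting process while carefully tracking which sets were replaced in order to recover the general extremal family $\{F : i \in F,\, F \cap A \neq \emptyset\} \cup \{A\}$ up to relabelling. For $k = 3$, a second extremal family appears, namely the triangle family $\{F \in \binom{[n]}{3} : |F \cap A| \ge 2\}$ for some triple $A$, which is shift-invariant and therefore cannot be collapsed into the star-plus-extra configuration by any sequence of shifts. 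I would treat $k = 3$ by a short direct case analysis on shifted extremal families of triples to recover both extremal structures.
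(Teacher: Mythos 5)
The paper does not prove this statement: it is the classical Hilton--Milner theorem, stated with the citation~\cite{HM67} and invoked as a black box in the proof of Theorem~\ref{thm:alphaU(n,k)}. There is therefore no proof in the paper to compare yours against.

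As for the proposal on its own terms: the shifting approach you outline is a legitimate and well-documented route to Hilton--Milner (closer to the Frankl and Frankl--F\"uredi compression proofs than to Hilton and Milner's original argument), but what you have written is a plan, not a proof, and at least one step is stated in a way that cannot be right. You write that ``a non-trivial intersecting family contained in a star is forced to omit at least $\binom{n-k-1}{k-1}-1$ of its members''; a family contained in a star is trivial by definition, so this sentence is self-contradictory. What you presumably mean is to bound a non-trivial intersecting $\calG$ all of whose members contain $i$ or $j$ --- the configuration just before a shift $S_{ij}$ collapses the family into the $i$-star --- and that requires a genuine argument which you do not supply. Likewise the bound on $|\calB|$ is deferred to an unspecified ``EKR-style argument''; in the compressed setting the key facts are that $\{2,\ldots,k+1\}\in\calB$ whenever $\calB\neq\emptyset$, that this forces $\calA\subseteq\{F:1\in F,\ F\cap\{2,\ldots,k+1\}\neq\emptyset\}$, and that for $n>2k$ and $k\geq 4$ one must actually show $|\calB|\leq 1$ in the extremal case, none of which is carried out. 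The equality characterization, which is indeed the delicate part, is only gestured at. In short: the strategy is viable, but the write-up has concrete gaps and one incoherent claim, so it does not yet constitute a proof.
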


\subsection{Complexity Classes}

The complexity class $\TFNP$ consists of the total search problems in $\NP$, i.e., the search problems in which every input has a solution, where a solution can be verified in polynomial time. The complexity class $\PPA$ (Polynomial Parity Argument~\cite{Papa94}) consists of the problems in $\TFNP$ that can be reduced in polynomial time to a problem called $\LEAF$. The definition of the $\LEAF$ problem is not needed in this paper, but we mention it briefly below for completeness.

The $\LEAF$ problem asks, given a graph with maximum degree $2$ and a leaf (i.e., a vertex of degree $1$), to find another leaf in the graph. The input graph, though, is not given explicitly. Instead, the vertex set of the graph is defined to be $\{0,1\}^n$ for some integer $n$, and the graph is succinctly represented by a Boolean circuit that for a vertex of the graph computes its (at most two) neighbors. Note that the size of the graph might be exponential in the size of its description.

\subsection{Computational Problems}\label{sec:comput_prob}

We gather here several computational problems that will be studied and used throughout the paper.
We start with a computational search problem associated with Schrijver graphs. This problem is studied in Section~\ref{sec:Sch}.

\begin{definition}[Generalized Schrijver Problem]\label{def:SchrijverProb}
For $m = m(n,k)$, the $\SchrijverP(n,k,m)$ problem is defined as follows.
The input consists of two integers $n$ and $k$ with $n \geq 2k$ and a coloring $c: \binom{[n]}{k}_{\mathrm{stab}} \rightarrow [m]$ of the vertices of the graph $S(n,k)$ with $m=m(n,k)$ colors, and the goal is to find a monochromatic edge, i.e., two vertices $A,B \in \binom{[n]}{k}_{\mathrm{stab}}$ such that $A \cap B = \emptyset$ and $c(A)=c(B)$.
In the black-box input model, the coloring $c$ is given as an oracle access that given a vertex $A$ outputs its color $c(A)$. In the white-box input model, the coloring $c$ is given by a Boolean circuit that for a vertex $A$ computes its color $c(A)$.
For $m = n-2k+1$, the problem $\SchrijverP(n,k,m)$ is denoted by $\SchrijverP$.
\end{definition}

The $\KneserP$ problem is defined similarly to the $\SchrijverP$ problem. Here, the input coloring $c: \binom{[n]}{k} \rightarrow [n-2k+1]$ is defined on the entire vertex set of $K(n,k)$. By Theorem~\ref{thm:chi_KorS(n,k)}, every input of the $\SchrijverP$ and $\KneserP$ problems is guaranteed to have a solution. Moreover, whenever $m = m(n,k)  \leq n-2k+1$, every input of the $\SchrijverP(n,k,m)$ problem has a solution as well.

We remark that algorithms for the $\SchrijverP(n,k,m)$ problem are considered in this paper with respect to the black-box input model.
The running time of such an algorithm is referred to as polynomial if it is polynomial in $n$.
Observe that a polynomial-time algorithm for the $\SchrijverP(n,k,m)$ problem in the black-box input model yields an algorithm for the analogue problem in the white-box input model, whose running time is polynomial as well (in the input size).
For computational complexity results, like reductions and $\PPA$-completeness, we adopt the more suitable white-box input model.
For example, the $\SchrijverP$ problem in the white-box input model was shown in~\cite{Haviv22-FISC} to be $\PPA$-complete.

Another search problem studied in~\cite{Haviv22-FISC} is the following.

\begin{definition}[Fair Independent Set in Cycle Problem]\label{def:FISC}
In the $\FISC$ problem, the input consists of integers $n$ and $m$ along with a partition $V_1, \ldots ,V_m$ of $[n]$ into $m$ sets. The goal is to find a stable subset $S$ of $[n]$ satisfying $|S \cap V_i| \geq \frac{1}{2} \cdot |V_i|-1$ for all $i \in [m]$.
\end{definition}
\noindent
The existence of a solution for every input of the $\FISC$ problem was proved in~\cite{AABCKLZ17}.
It was shown in~\cite{Haviv22-FISC} that the $\FISC$ problem is $\PPA$-complete, even restricted to instances in which each part $V_i$ of the given partition has an odd size larger than $2$.

We next define the $\Uncovered$ problem, studied in Section~\ref{sec:Constrained}.
\begin{definition}[Unfair Independent Set in Cycle Problem]\label{def:Uncovered}
The input of the $\Uncovered$ problem consists of two integers $n$ and $k$ with $n \geq 2k$ and $\ell$ subsets $V_1, \ldots, V_{\ell}$ of $[n]$, where $\ell \leq n-2k+1$ and $|V_i| \geq 2$ for all $i \in [\ell]$. The goal is to find a stable $k$-subset $S$ of $[n]$ that satisfies the constraints $|S \cap V_i| \leq |V_i|/2$ for $i \in [\ell]$.
\end{definition}
\noindent
Note that Definition~\ref{def:Uncovered} requires the sets $V_1, \ldots, V_{\ell}$ of an instance of the $\Uncovered$ problem to satisfy $|V_i| \geq 2$ for all $i \in [\ell]$.
This requirement is justified by the observation that if $|V_i|=1$ for some $i \in [\ell]$, then any solution for this instance does not include the single element of $V_i$. Hence, by removing this element from the given sets and from the ground set, such an instance can be reduced to an instance with ground set of size smaller by one.
By repeatedly applying this reduction, one can get a `core' instance that fits Definition~\ref{def:Uncovered}.

We observe that the $\Uncovered$ problem is total. The argument relies on the chromatic number of the graph $S(n,k)$.
\begin{lemma}\label{lemma:Unco_total}
Every instance of the $\Uncovered$ problem has a solution.
\end{lemma}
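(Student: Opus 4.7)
The plan is to argue by contradiction, turning the absence of a solution into a proper coloring of $S(n,k)$ with too few colors, thereby contradicting Theorem~\ref{thm:chi_KorS(n,k)}.

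Assume for contradiction that the given instance $(n,k,V_1,\ldots,V_\ell)$ of $\Uncovered$ has no solution. Then every stable $k$-subset $S$ of $[n]$ violates at least one constraint, meaning that the index
\[
c(S) = \min\bigl\{\, i \in [\ell] : |S \cap V_i| > |V_i|/2 \,\bigr\}
\]
is well-defined. This yields a map $c : \binom{[n]}{k}_{\mathrm{stab}} \to [\ell]$, which we will regard as a coloring of the vertices of the Schrijver graph $S(n,k)$.

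Next I would verify that $c$ is a proper coloring of $S(n,k)$. Let $S, T \in \binom{[n]}{k}_{\mathrm{stab}}$ be adjacent in $S(n,k)$, i.e., $S \cap T = \emptyset$, and suppose toward contradiction that $c(S) = c(T) = i$. Then $|S \cap V_i| > |V_i|/2$ and $|T \cap V_i| > |V_i|/2$, so adding these strict inequalities gives
\[
|S \cap V_i| + |T \cap V_i| > |V_i|.
\]
On the other hand, since $S$ and $T$ are disjoint, the sets $S \cap V_i$ and $T \cap V_i$ are disjoint subsets of $V_i$, so their sizes sum to at most $|V_i|$, a contradiction. Hence $c(S) \neq c(T)$, and $c$ is indeed a proper coloring of $S(n,k)$ with at most $\ell$ colors.

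Finally, the hypothesis $\ell \leq n-2k+1$ combined with Theorem~\ref{thm:chi_KorS(n,k)} gives
\[
\ell \leq n-2k+1 < n-2k+2 = \chi(S(n,k)),
\]
which contradicts the existence of the proper $\ell$-coloring $c$. Therefore the assumption that no solution exists must be false, completing the argument. No real obstacle arises here; the only point requiring care is that the inequality defining a violation is strict, so that the disjointness argument in the properness step goes through for every value of $|V_i| \geq 2$ (and in fact even when $|V_i|=1$, which is why the totality is not sensitive to that corner case).
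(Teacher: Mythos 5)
Your proof is correct and is essentially the paper's proof recast in contrapositive form: where you build a proper $\ell$-coloring of $S(n,k)$ from a purported solution-free instance and invoke $\chi(S(n,k)) = n-2k+2$, the paper directly defines the intersecting families $\calF_i$ of violators of each constraint $V_i$ and notes they cannot cover all of $\binom{[n]}{k}_{\mathrm{stab}}$ since that would give too few colors. The two formulations are equivalent, and your verification that disjoint sets cannot both majorize $V_i$ is the same intersecting-family observation.
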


\begin{proof}
Consider an instance of the $\Uncovered$ problem, i.e., integers $n$ and $k$ with $n \geq 2k$ and $\ell$ subsets $V_1, \ldots, V_{\ell}$ of $[n]$, where $\ell \leq n-2k+1$ and $|V_i| \geq 2$ for all $i \in [\ell]$.
For every $i \in [\ell]$, let
\[\calF_i = \bigg \{ S \in \binom{[n]}{k}_{\mathrm{stab}} ~\Big{|}~ |S \cap V_i| > |V_i|/2 \bigg \},\]
and notice that every two sets of $\calF_i$ have a common element of $V_i$, hence $\calF_i$ is an intersecting family. However, by Theorem~\ref{thm:chi_KorS(n,k)}, the chromatic number of $S(n,k)$ is $n-2k+2$, hence the family of stable $k$-subsets of $[n]$ cannot be covered by fewer than $n-2k+2$ intersecting families. By $\ell \leq n-2k+1$, this implies that there exists a set $S \in \binom{[n]}{k}_{\mathrm{stab}}$ that does not belong to any of the families $\calF_i$, hence it satisfies $|S \cap V_i| \leq |V_i|/2$ for all $i \in [\ell]$. This implies that $S$ is a valid solution for the given instance, and we are done.
\end{proof}

We end this section with the definition of the $\CT$ problem.
\begin{definition}[Cycle plus Triangles Problem]\label{def:CT}
In the $\CT$ problem, the input consists of an integer $k$ and a graph $G$ on $3k$ vertices, whose edge set is the disjoint union of a Hamilton cycle and $k$ pairwise vertex-disjoint triangles. The goal is to find an independent set in $G$ of size $k$.
\end{definition}
\noindent
The existence of a solution for every input of the $\CT$ problem follows from a result of~\cite{FleischnerS92} (see also~\cite{Sachs93,AABCKLZ17}).

\section{The Generalized Schrijver Problem}\label{sec:Sch}

In this section, we prove our results on the $\SchrijverP(n,k,m)$ problem (see Definition~\ref{def:SchrijverProb}).
We start with Theorem~\ref{thm:Sch_m_alpha}.

\begin{proof}[ of Theorem~\ref{thm:Sch_m_alpha}]
Fix some integer $d \geq 2$.
For integers $n$ and $k$ with $n \geq 2k$, put $t = \lfloor \frac{n}{2k+d-2} \rfloor$ and $m=d \cdot t-1$, and consider an instance of the $\SchrijverP(n,k,m)$ problem, i.e., a coloring $c: \binom{n}{k}_{\mathrm{stab}} \rightarrow [m]$ of the vertices of $S(n,k)$. The definition of $t$ allows us to consider $t$ pairwise disjoint subsets $J_1, \ldots, J_t$ of $[n]$, where each of the subsets includes $2k+d-2$ consecutive elements.
For each $i \in [t]$, let $\calS_i$ denote the family of all stable $k$-subsets of $J_i$ with respect to the natural cyclic order of $J_i$ (where the largest element precedes the smallest one), and notice that $\calS_i \subseteq \binom{n}{k}_{\mathrm{stab}}$.
Consider the algorithm that given an oracle access to a coloring $c$ as above, queries the oracle for the colors of all the sets of $\calS_1 \cup \cdots \cup \calS_t$, and returns a pair of disjoint sets from this collection that are assigned the same color by $c$.

For correctness, we show that the collection of sets $\calS_1 \cup \cdots \cup \calS_t$ necessarily includes two vertices that form a monochromatic edge.
Indeed, since the number of colors used by the coloring $c$ does not exceed $d \cdot t -1$, it follows that either there exist distinct $i,j \in [t]$ for which a vertex of $\calS_i$ and a vertex of $\calS_j$ have the same color, or there exists an $i \in [t]$ for which the vertices of $\calS_i$ are colored using fewer than $d$ colors.
For the former case, notice that for distinct $i$ and $j$, every vertex of $\calS_i$ is disjoint from every vertex of $\calS_j$, hence the collection includes two vertices that form a monochromatic edge. For the latter case, let $i \in [t]$ be an index for which the vertices of $\calS_i$ are colored using fewer than $d$ colors.
Observe that the subgraph of $S(n,k)$ induced by $\calS_i$ is isomorphic to the graph $S(2k+d-2,k)$, hence by Theorem~\ref{thm:chi_KorS(n,k)}, its chromatic number is $(2k+d-2)-2k+2 = d$.
Since the vertices of $\calS_i$ are colored using fewer than $d$ colors, it follows that they include two vertices that form a monochromatic edge, and we are done.

We finally analyze the running time of the algorithm. By Lemma~\ref{lemma:|S(n,k)|}, the number of vertices in the graph $S(n,k)$ is
\[\frac{n}{k} \cdot \binom{n-k-1}{k-1} = \frac{n}{k} \cdot \binom{n-k-1}{n-2k} \leq n \cdot (n-k-1)^{n-2k} \leq n^{n-2k+1}.\]
Since the subgraph of $S(n,k)$ induced by each $\calS_i$ is isomorphic to $S(2k+d-2,k)$, it follows that the total number of queries that the algorithm makes does not exceed $t \cdot (2k+d-2)^{d-1} \leq n^{O(d)}$. This implies that in running time $n^{O(d)}$, it is possible to enumerate all the sets of $\calS_1 \cup \cdots \cup \calS_t$, to query the oracle for their colors, and to find the desired monochromatic edge. This completes the proof.
\end{proof}

We consider now the $\SchrijverP(n,k,m)$ problem with $m = \lfloor n/2 \rfloor-2k+1$.
We first prove Theorem~\ref{thm:Schr<Kneser} that says that the problem is efficiently reducible to the $\KneserP$ problem (whose definition is given in Section~\ref{sec:comput_prob}).

\begin{proof}[ of Theorem~\ref{thm:Schr<Kneser}]
Put $m = \lfloor n/2 \rfloor-2k+1$, and let $c: \binom{[n]}{k}_{\mathrm{stab}} \rightarrow [m]$ be an instance of the $\SchrijverP(n,k,m)$ problem.
Consider the reduction that maps such a coloring $c$ to a coloring $c': \binom{[n]}{k} \rightarrow [n-2k+1]$ of the vertices of $K(n,k)$ defined as follows.
For every set $A \in \binom{[n]}{k}$, if $A$ is unstable then it includes an odd element, so denote its smallest odd element by $2i-1$, and define $c'(A)=i$. Notice that this $i$ satisfies $1 \leq i \leq \lceil n/2 \rceil$.
Otherwise, $A$ is a stable $k$-subset of $[n]$, and we define $c'(A) = c(A)+\lceil n/2 \rceil$.
Notice that $m+\lceil n/2 \rceil = n-2k+1$, hence the colors used by $c'$ are all in $[n-2k+1]$, as needed for an instance of the $\KneserP$ problem.
Notice further that given a Boolean circuit that computes the coloring $c$, it is possible to efficiently produce a Boolean circuit that computes the coloring $c'$.

For correctness, we simply show that any solution for the produced instance of the $\KneserP$ problem is also a solution for the given instance of the $\SchrijverP(n,k,m)$ problem.
To see this, consider a solution for the former, i.e., two disjoint $k$-subsets $A$ and $B$ of $[n]$ with $c'(A) = c'(B)$.
By the definition of $c'$, the color assigned by $c'$ to $A$ and $B$ cannot be some $i \leq \lceil n/2 \rceil$ because this would imply that the element $2i-1$ belongs to both $A$ and $B$, which are disjoint. It thus follows that $A$ and $B$ are stable $k$-subsets of $[n]$ satisfying $c'(A) = c(A) + \lceil n/2 \rceil$ and $c'(B) = c(B) + \lceil n/2 \rceil$. By $c'(A)=c'(B)$, it follows that $c(A)=c(B)$, hence $A$ and $B$ form a monochromatic edge in $S(n,k)$ and thus a solution for the given instance of the $\SchrijverP(n,k,m)$ problem. This completes the proof.
\end{proof}

The reduction presented in the proof of Theorem~\ref{thm:Schr<Kneser} extends a given coloring of $S(n,k)$ to a coloring of the entire graph $K(n,k)$. To do so, it uses a proper coloring with $\lceil n/2 \rceil$ colors of the subgraph $U(n,k)$ of $K(n,k)$ induced by the {\em unstable} $k$-subsets of $[n]$. However, in order to obtain a coloring of $K(n,k)$ with $n-2k+1$ colors, as required for instances of the $\KneserP$ problem, one has to reduce from the $\SchrijverP(n,k,m)$ problem with $m = \lfloor n/2 \rfloor-2k+1$. This suggests the question of whether $U(n,k)$ can be properly colored using fewer colors. Motivated by this question, we study some properties of this graph in Section~\ref{sec:unstable}, where we essentially answer this question in the negative (see Corollary~\ref{cor:chiU(n,k)} and the discussion that follows it).

We next show that the $\SchrijverP(n,k,m)$ problem with $m = \lfloor n/2 \rfloor-2k+1$ is not harder than the restriction of the standard $\SchrijverP$ problem to colorings of $S(n,k)$ with $n=4k$.
This confirms Theorem~\ref{thm:Sch_4k}.

\begin{proof}[ of Theorem~\ref{thm:Sch_4k}]
Suppose that there exists a polynomial-time algorithm, called $\mathsf{Algo}$, for the restriction of the $\SchrijverP$ problem to colorings of $S(n,k)$ with $n=4k$.
Such an algorithm is able to efficiently find a monochromatic edge in the graph $S(4k,k)$ given an access to a coloring of its vertices with fewer than $\chi(S(4k,k))$ colors.
By Theorem~\ref{thm:chi_KorS(n,k)}, it holds that $\chi(S(4k,k)) = 2k+2$.
Suppose without loss of generality that the algorithm $\mathsf{Algo}$ queries the oracle for the colors of the two vertices of the monochromatic edge that it returns.

For integers $n$ and $k$ with $n \geq 4k$, put $m = \lfloor n/2 \rfloor -2k+1$, and let $c: \binom{n}{k}_{\mathrm{stab}} \rightarrow [m]$ be an instance of the $\SchrijverP(n,k,m)$ problem, i.e., a coloring of the vertices of $S(n,k)$ with $m$ colors.
We present an algorithm that finds a monochromatic edge in $S(n,k)$. It may be assumed that $n > 8k$. Indeed, otherwise it holds that $m \leq 2k+1 < \chi(S(4k,k))$, hence a monochromatic edge can be found by running the given algorithm $\mathsf{Algo}$ on the restriction of the coloring $c$ to the subgraph of $S(n,k)$ induced by the stable $k$-subsets of $[4k]$. Since this graph is isomorphic to $S(4k,k)$, $\mathsf{Algo}$ is guaranteed to find a monochromatic edge in this subgraph, which also forms a monochromatic edge in the entire graph $S(n,k)$.

Now, put $t = \lfloor \frac{n}{4k} \rfloor$, and let $J_1, \ldots, J_t$ be $t$ pairwise disjoint subsets of $[n]$, where each of the subsets includes $4k$ consecutive elements.
For each $i \in [t]$, let $\calS_i$ denote the family of all stable $k$-subsets of $J_i$ with respect to the natural cyclic order of $J_i$ (where the largest element precedes the smallest one). Observe that the subgraph of $S(n,k)$ induced by the vertices of each $\calS_i$ is isomorphic to $S(4k,k)$.
Observe further that
\begin{eqnarray}\label{eq:m<t(2k+2)}
t \cdot (2k+2) > \Big (\frac{n}{4k}-1 \Big ) \cdot (2k+2) = \frac{n}{2}+\frac{n}{2k}-2k-2 > \Big \lfloor \frac{n}{2} \Big \rfloor -2k+1 = m,
\end{eqnarray}
where the last inequality holds because $n > 8k$.

Consider the algorithm that given an oracle access to a coloring $c$ as above, for each $i \in [t]$, simulates the algorithm $\mathsf{Algo}$ on the restriction of the coloring $c$ to the subgraph of $S(n,k)$ induced by the vertices of $\calS_i$.
If all the vertices queried throughout the $i$th simulation have at most $2k+1$ distinct colors, then the algorithm returns the monochromatic edge returned by $\mathsf{Algo}$. Otherwise, for each $i \in [t]$, the algorithm uses the queries made in the $i$th simulation of $\mathsf{Algo}$ to produce a set $\calF_i \subseteq \calS_i$ of $2k+2$ vertices with distinct colors.
Then, the algorithm finds a monochromatic edge that involves two vertices of $\calF_1 \cup \cdots \cup \calF_t$ and returns it.
This completes the description of the algorithm.
Since the running time of $\mathsf{Algo}$ is polynomial, the described algorithm can be implemented in polynomial time.

Let us prove the correctness of the algorithm.
Suppose first that for some $i \in [t]$, all the vertices queried throughout the $i$th simulation of $\mathsf{Algo}$ have at most $2k+1$ distinct colors (including the two vertices of the returned edge).
In this case, the answers of the oracle in the $i$th simulation is consistent with some coloring with at most $2k+1$ colors of the subgraph of $S(n,k)$ induced by $\calS_i$. Since this graph is isomorphic to $S(4k,k)$, whose chromatic number is $2k+2$, $\mathsf{Algo}$ is guaranteed to find in the graph a monochromatic edge, which is also a monochromatic edge in $S(n,k)$, and thus a valid output of the algorithm.
Otherwise, for each $i \in [t]$, the attempt to simulate $\mathsf{Algo}$ on the subgraph of $S(n,k)$ induced by $\calS_i$ provides a set $\calF_i$ of $2k+2$ vertices of $\calS_i$ with distinct colors.
By~\eqref{eq:m<t(2k+2)}, the total number $m$ of colors used by the coloring $c$ is smaller than $t \cdot (2k+2)$. This implies that there exist distinct indices $i,j \in [t]$ for which a vertex of $\calF_i$ and a vertex of $\calF_j$ have the same color. Since the vertices of $\calS_i$ are disjoint from those of $\calS_j$, these two vertices form a monochromatic edge in $S(n,k)$ and form a valid output of the algorithm.
\end{proof}

We end this section with the observation that there exists an efficient randomized algorithm for the $\SchrijverP(n,k,m)$ problem with $m = \lfloor n/2 \rfloor-2k+1$ on instances with $n = \Omega( k^4)$.
This follows from the paper~\cite{Haviv22b}, which yields that for such $n$ and $k$, the $\SchrijverP(n,k,m)$ problem is essentially reducible to the $\SchrijverP(n-1,k,m-1)$ problem in randomized polynomial time (with exponentially small failure probability).
By applying this reduction $m-1$ times, it follows that the $\SchrijverP(n,k,m)$ problem with $m = \lfloor n/2 \rfloor-2k+1$ where $n = \Omega(k^4)$, is efficiently reducible to the $\SchrijverP(\lceil n/2 \rceil +2k,k,1)$ problem, which can obviously be solved efficiently.

\section{The Unfair Independent Set in Cycle Problem}\label{sec:Constrained}

In this section, we study the $\Uncovered$ problem (see Definition~\ref{def:Uncovered}).

\subsection{Hardness}

We prove now Theorem~\ref{thm:Unc_PPA}, which asserts that $\Uncovered$ is $\PPA$-complete.

\begin{proof}[ of Theorem~\ref{thm:Unc_PPA}]
We first show that the $\Uncovered$ problem belongs to $\PPA$.
To do so, we show a polynomial-time reduction to the $\SchrijverP$ problem in the white-box input model, which lies in $\PPA$~\cite{Haviv22-FISC} (see Definition~\ref{def:SchrijverProb}).

Consider an instance of the $\Uncovered$ problem, i.e., integers $n$ and $k$ with $n \geq 2k$ and $\ell$ subsets $V_1, \ldots, V_{\ell}$  of $[n]$, where $\ell \leq n-2k+1$ and $|V_i| \geq 2$ for all $i \in [\ell]$. For such an instance, the reduction produces a Boolean circuit that given a stable $k$-subset $A$ of $[n]$, outputs the smallest index $i \in [\ell]$ such that $|A \cap V_i| > |V_i|/2$ if such an $i$ exists, and outputs $\ell$ otherwise. Note that this circuit represents a coloring $c: \binom{[n]}{k}_{\mathrm{stab}} \rightarrow [\ell]$ of the vertices of the graph $S(n,k)$ with $\ell \leq n-2k+1$ colors, hence it is an appropriate instance of the $\SchrijverP$ problem. Clearly, the Boolean circuit that computes $c$ can be constructed in polynomial time.

For correctness, we show that a solution for the constructed $\SchrijverP$ instance can be used to efficiently find a solution for the given $\Uncovered$ instance.
Consider a monochromatic edge of $S(n,k)$, i.e., two disjoint sets $A,B \in \binom{[n]}{k}_{\mathrm{stab}}$ with $c(A)=c(B)$. Since $A$ and $B$ are disjoint, it is impossible that $|A \cap V_i| > |V_i|/2$ and $|B \cap V_i| > |V_i|/2$ for some $i \in [\ell]$. By the definition of the coloring $c$, it follows that $c(A)=c(B)=\ell$, hence $|A \cap V_i| \leq |V_i|/2$ and $|B \cap V_i| \leq |V_i|/2$ for all $i \in [\ell-1]$. Moreover, at least one of $A$ and $B$ intersects $V_\ell$ at no more than $|V_\ell|/2$ elements, and thus forms a valid solution for the given $\Uncovered$ instance. Since it is possible to check in polynomial time which of the sets $A$ and $B$ satisfies this requirement, the proof of the membership of $\Uncovered$ in $\PPA$ is completed.

We next prove that the $\Uncovered$ problem is $\PPA$-hard. To do so, we reduce from the $\FISC$ problem (see Definition~\ref{def:FISC}). We use here the fact, proved in~\cite{Haviv22-FISC}, that this problem is $\PPA$-hard even when it is restricted to the instances in which the parts of the given partition have odd sizes larger than $2$.
Consider such an instance of the $\FISC$ problem, i.e., integers $n$ and $m$ along with a partition $V_1, \ldots ,V_m$ of $[n]$ such that $|V_i|$ is odd and satisfies $|V_i| \geq 3$ for all $i \in [m]$. Notice that $n$ and $m$ have the same parity, and define $k = \frac{n-m}{2}$. Our reduction simply returns the integers $n$ and $k$, which clearly satisfy $n \geq 2k$, and the sets $V_1, \ldots, V_m$. Note that $|V_i| \geq 2$ for all $i \in [m]$ and that the number $m$ of sets is $n-2k$. Since the latter does not exceed $n-2k+1$, this is a valid instance of the $\Uncovered$ problem.

For correctness, we show that a solution for the constructed $\Uncovered$ instance is also a solution for the given $\FISC$ instance.
Let $S$ be a solution for the $\Uncovered$ instance, i.e., a stable $k$-subset of $[n]$ such that for all $i \in [m]$ it holds that $|S \cap V_i| \leq |V_i|/2$. Since the sizes of the sets $V_1, \ldots, V_m$ are odd, it follows that $|S \cap V_i| \leq \frac{|V_i|-1}{2}$ for all $i \in [m]$. Since the sets $V_1, \ldots, V_m$ form a partition of $[n]$, it further follows that
\begin{eqnarray}\label{eq:|S|_reduction}
|S| = \sum_{i \in [m]}{|S \cap V_i|} \leq \sum_{i \in [m]}{\tfrac{|V_i|-1}{2}} = \frac{n-m}{2}=k.
\end{eqnarray}
However, by $|S|=k$, we derive from~\eqref{eq:|S|_reduction} that $|S \cap V_i| = \frac{|V_i|-1}{2}$ for all $i \in [m]$.
This implies that $S$ is a stable subset of $[n]$ satisfying $|S \cap V_i| \geq |V_i|/2-1$ for all $i \in [m]$, hence it forms a valid solution for the given $\FISC$ instance.
This completes the proof.
\end{proof}

Given the $\PPA$-hardness of the $\Uncovered$ problem, it is interesting to identify the range of the parameters $n$ and $k$ for which the hardness holds.
One can verify, using properties of the hard instances constructed in~\cite{Haviv22-FISC}, that the hardness given in Theorem~\ref{thm:Unc_PPA} holds for instances with $n = (2+o(1)) \cdot k$, where the $o(1)$ term tends to $0$ as $n$ and $k$ tend to infinity.
The following simple result shows that for $n=3k$ the problem is at least as hard as the $\CT$ problem, whose tractability is an open question (see Definition~\ref{def:CT}).

\begin{proposition}\label{prop:CT<Un}
The $\CT$ problem is polynomial-time reducible to the restriction of the $\Uncovered$ problem to instances that consist of $k$ sets of size $3$ that form a partition of $[n]$ where $n=3k$.
\end{proposition}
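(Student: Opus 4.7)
The plan is to exhibit a very direct, essentially syntactic reduction: identify the Hamilton cycle of a $\CT$ instance with the cycle on $[n]$ underlying the definition of stable sets, and take the triangles themselves as the constraint sets $V_i$.

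More precisely, given a $\CT$ instance consisting of an integer $k$ and a graph $G$ on $3k$ vertices whose edges form the disjoint union of a Hamilton cycle $C$ and pairwise vertex-disjoint triangles $T_1,\ldots,T_k$, I would first relabel the vertices of $G$ by $1,2,\ldots,n$ with $n=3k$ so that the Hamilton cycle $C$ uses consecutive elements (and the pair $\{1,n\}$). This relabeling can be computed in linear time by walking along $C$. Having done this, the triangles $T_1,\ldots,T_k$ become a partition of $[n]$ into $k$ sets of size $3$, which I declare to be the sets $V_1,\ldots,V_k$ of the output $\Uncovered$ instance.

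I then verify that this output is a legal $\Uncovered$ instance in the stated restricted family. The integers $n=3k$ and $k$ satisfy $n\ge 2k$; the number of constraint sets is $\ell=k$, which satisfies $\ell\le n-2k+1=k+1$; the sets $V_1,\ldots,V_k$ form a partition of $[n]$ into parts of size $3\ge 2$. Hence the instance meets all the requirements of Definition~\ref{def:Uncovered} and of the statement.

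For correctness, suppose $S$ is any solution to the produced $\Uncovered$ instance. By definition, $S$ is a stable $k$-subset of $[n]$, so it contains no two consecutive elements of $[n]$ and not both $1$ and $n$; equivalently, $S$ is an independent set in the Hamilton cycle $C$. The constraints give $|S\cap V_i|\le |V_i|/2=3/2$, which for integers means $|S\cap V_i|\le 1$ for every $i\in[k]$; that is, $S$ contains at most one vertex of each triangle $T_i$, so $S$ is also an independent set in the union of the $T_i$. Combining, $S$ is an independent set in $G$ of size $k$, i.e., a valid $\CT$ solution, and can clearly be output as such in polynomial time. I do not expect a genuine obstacle here: the restriction $n=3k$ in the statement exactly matches $\ell=k\le n-2k+1$, and the parity threshold $|V_i|/2=3/2$ automatically collapses the fractional constraint to the combinatorial constraint ``at most one per triangle,'' which is what makes the reduction a perfect fit.
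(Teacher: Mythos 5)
Your reduction is the same one the paper gives: relabel along the Hamilton cycle so it becomes the cycle on $[n]$, take the triangle vertex-sets as the constraint sets $V_1,\ldots,V_k$, and observe that the threshold $|V_i|/2 = 3/2$ collapses to ``at most one vertex per triangle.'' The correctness argument is correct and matches the paper's.

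There is, however, one gap in the polynomial-time claim that you gloss over. You write that the relabeling ``can be computed in linear time by walking along $C$,'' but per Definition~\ref{def:CT} the input to $\CT$ is just the graph $G$ --- the decomposition into the Hamilton cycle $C$ and the triangles $T_1,\ldots,T_k$ is \emph{not} part of the input, only promised to exist. You cannot walk along $C$ until you have found it, and finding Hamilton cycles is hard in general. The reduction therefore needs the nontrivial fact that for a graph promised to be a disjoint union of a Hamilton cycle and vertex-disjoint triangles, the decomposition can be recovered in polynomial time; the paper cites B\'erczi and Kobayashi for exactly this. Without that step (or an explicit assumption that the decomposition is given as part of the input), the reduction's running-time bound is unjustified.
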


\begin{proof}
Consider an instance of the $\CT$ problem, i.e., an integer $k$ and a graph $G$ on $3k$ vertices, whose edge set is the disjoint union of a Hamilton cycle and $k$ pairwise vertex-disjoint triangles. It was shown in~\cite{Berczi017} that given such a graph it is possible to find in polynomial time such Hamilton cycle and triangles.
Let $V_1, \ldots, V_k$ denote the triplets of vertices of the $k$ triangles, and assume without loss of generality that the vertices along the Hamilton cycle are labeled by the elements of $[3k]$ according to their natural cyclic order.
Consider the polynomial-time reduction that given such an instance, returns the integers $n=3k$ and $k$ along with the sets $V_1, \ldots, V_k$. Observe that the number $k$ of sets does not exceed $n-2k+1=k+1$, hence the reduction returns an appropriate instance of the $\Uncovered$ problem.
Note that the reduction can be implemented in polynomial time.

For correctness, consider a solution for the produced instance, i.e., a stable $k$-subset $S$ of $[n]$ satisfying $|S \cap V_i| \leq 3/2$, and thus $|S \cap V_i| \leq 1$, for all $i \in [k]$. Since the sets $V_1, \ldots, V_k$ form a partition of $[n]$, using $|S|=k$, it follows that $|S \cap V_i| = 1$ for all $i \in [k]$. Therefore, the stable set $S$ is an independent set of size $k$ in $G$, and thus forms a solution for the given $\CT$ instance as well.
\end{proof}

\subsection{Algorithms}

We next prove Theorem~\ref{thm:alg_c}, which states that the $\Uncovered$ problem can be solved efficiently on instances with $n \geq c \cdot k$ for some absolute constant $c$.

\begin{proof}[ of Theorem~\ref{thm:alg_c}]
We start by presenting a randomized algorithm, based on a probabilistic argument with alterations, and then derandomize it using the method of conditional expectations.

Consider an instance of the $\Uncovered$ problem, i.e., integers $n$ and $k$ with $n \geq 2k$ and $\ell$ subsets $V_1, \ldots, V_\ell$ of $[n]$, where $\ell \leq n-2k+1$ and $|V_i| \geq 2$ for all $i \in [\ell]$. Put $r_i = |V_i| \geq 2$ for each $i \in [\ell]$. Suppose further that $n \geq c \cdot k$ for a sufficiently large constant $c$ to be determined later.
Let $p = 2k/n \leq 2/c$, and consider the following randomized algorithm.
\begin{enumerate}
  \item\label{itm:1} Pick a random subset $A$ of $[n]$ by including in $A$ every element of $[n]$ independently with probability $p$.
  \item\label{itm:2} Remove from $A$ every element $j \in [n]$ that satisfies $\{j,j+1\} \subseteq A$ (where for $j=n$, the element $j+1$ is considered as $1$). Let $A'$ denote the obtained set.
  \item\label{itm:3} For every $i \in [\ell]$ that satisfies $|A' \cap V_i|>r_i/2$, remove from $A'$ arbitrary $|A' \cap V_i| - \lfloor r_i/2 \rfloor$ elements of $V_i$. Let $A''$ denote the obtained set.
  \item\label{itm:4} If $|A''| \geq k$, then return an arbitrary $k$-subset of $A''$. Otherwise, return `failure'.
\end{enumerate}

We first claim that unless the algorithm returns `failure', it returns a valid output. Indeed, Item~\ref{itm:2} of the algorithm guarantees that the set $A'$ is stable. Further, Item~\ref{itm:3} guarantees that its subset $A''$ satisfies $|A'' \cap V_i| \leq \lfloor r_i/2 \rfloor$ for all $i \in [\ell]$. Therefore, in the case where $|A''| \geq k$, any $k$-subset of $A''$ returned in Item~\ref{itm:4} of the algorithm is a valid solution for the given $\Uncovered$ instance.

We next estimate the expected size of the set $A''$ produced by the algorithm.
The set $A$ chosen in Item~\ref{itm:1} of the algorithm includes every element of $[n]$ with probability $p$. Hence, its expected size satisfies $\Expec{}{|A|} = p \cdot n$. In Item~\ref{itm:2} of the algorithm, the probability of every element of $[n]$ to be removed from $A$ is equal to the probability that both the element and its successor modulo $n$ belong to $A$, which is $p^2$. By linearity of expectation, this implies that the expected size of the set $A'$ satisfies $\Expec{}{|A'|} = (p-p^2) \cdot n$.
It remains to estimate the expected number of elements removed from $A'$ in Item~\ref{itm:3} of the algorithm.
Observe that for each $i \in [\ell]$, the algorithm removes from $A'$ the smallest possible number of elements of $V_i$ ensuring that the obtained set $A''$ includes at most $\lfloor r_i/2 \rfloor$ of them.
Therefore, the number of removed elements of $V_i$ does not exceed the number of subsets of $V_i$ of size $\lfloor r_i/2 \rfloor + 1$ that are contained in $A$ (because it suffices to remove one element from each of them). It thus follows that the expected number of elements of $V_i$ that are removed from $A'$ in Item~\ref{itm:3} of the algorithm is at most
\[ \binom{r_i}{\lfloor r_i/2 \rfloor+1} \cdot p^{\lfloor r_i/2 \rfloor + 1} \leq 2^{r_i} \cdot p^{\lfloor r_i/2 \rfloor+1} \leq (4p)^{\lfloor r_i/2 \rfloor + 1} \leq (4p)^2,\]
where in the last inequality we use the assumption $r_i \geq 2$ and the fact that $p \leq 1/4$ (which holds for any sufficiently large choice of the constant $c$).
It therefore follows, using again the linearity of expectation, that the expected size of $A''$ satisfies
\[ \Expec{}{|A''|} \geq (p-p^2) \cdot n - \ell \cdot (4p)^2 \geq (p-17p^2) \cdot n \geq k, \]
where the second inequality holds by $\ell \leq n$, and the last inequality by the definition of $p=2k/n$, assuming again that $n \geq c \cdot k$ for a sufficiently large constant $c$ (say, $c=68$). This implies that there exists a random choice for the presented randomized algorithm for which it returns a valid solution.

We next apply the method of conditional expectations to derandomize the above algorithm.
Let us start with a few notations.
For a set $S \subseteq [n]$, define
\begin{eqnarray}\label{eq:f}
f(S) = |S| - |\{ j \in [n] \mid \{j,j+1\} \subseteq S\}| - \sum_{i \in [\ell]}{\Big | \Big \{B \subseteq S \cap V_i ~\Big{|}~ |B| = \lfloor r_i/2 \rfloor+1 \Big \} \Big |}.
\end{eqnarray}
In words, $f(S)$ is determined by subtracting from the size of $S$ the number of pairs of consecutive elements in $S$ (modulo $n$) as well as the number of subsets of $S \cap V_i$ of size $\lfloor r_i/2 \rfloor+1$ for each $i \in [\ell]$.
For a vector $x \in \{0,1,\ast\}^n$, let $S_x$ denote a random subset of $[n]$ such that for every $i \in [n]$, if $x_i=1$ then $i \in S_x$, if $x_i=0$ then $i \notin S_x$, and if $x_i = \ast$ then $i$ is chosen to be included in $S_x$ independently with probability $p=2k/n$. We refer to the vector $x$ as a {\em partial choice} of a subset of $[n]$.
We further define a potential function $\phi:\{0,1,\ast\}^n \rightarrow \R$ that maps every vector $x \in \{0,1,\ast\}^n$ to the expected value of $f(S)$ where $S$ is chosen according to the distribution $S_x$, that is, $\phi(x) = \Expec{}{f(S_x)}$.

We observe that given a partial choice $x \in \{0,1,\ast\}^n$, the value of $\phi(x)$ can be calculated efficiently, in time polynomial in $n$.
Indeed, to calculate the expected value of $f(S_x)$, it suffices, by linearity of expectation, to calculate the expected value of each of the three terms in~\eqref{eq:f} evaluated at the set $S_x$. It is easy to see that the expected value of the first term is
\[|\{j \in [n] \mid x_j=1\}|+ p \cdot |\{ j \in [n] \mid x_j = \ast\}|,\]
and that the expected value of the second term is
\[ \Big | \Big \{ j \in [n] \Big{|} x_j=x_{j+1}=1 \Big \} \Big | + p \cdot \Big | \Big \{ j \in [n] \Big{|} \{x_j,x_{j+1}\}=\{1,\ast\} \Big \} \Big | + p^2 \cdot \Big | \Big \{ j \in [n] \Big{|} x_j=x_{j+1}=\ast \Big \} \Big |.\]
As for the third term, by linearity of expectation, it suffices to determine the expected value of
\[ \Big | \Big \{B \subseteq S_x \cap V_i ~\Big{|}~ |B| = \lfloor r_i/2 \rfloor+1 \Big \} \Big |\]
for $i \in [\ell]$. Letting $s_i = |\{ j \in V_i \mid x_j=\ast\}|$ and $t_i = |\{ j \in V_i \mid x_j=1\}|$, one can check that the required expectation is precisely
\[\sum_{m=0}^{\lfloor r_i/2 \rfloor+1}{ \binom{s_i}{m} \cdot \binom{t_i}{\lfloor r_i/2 \rfloor+1 - m} \cdot p^m}.\]
Since all the terms can be calculated in time polynomial in $n$, so can $\phi(x)$.

We describe a deterministic algorithm that finds a set $S \subseteq [n]$ satisfying $f(S) \geq k$.
Given such a set, the algorithm is completed by applying Items~\ref{itm:2},~\ref{itm:3}, and~\ref{itm:4} of the algorithm presented above.
Indeed, by applying Items~\ref{itm:2} and~\ref{itm:3} we obtain a stable set $S''$ such that $|S'' \cap V_i| \leq r_i/2$ for all $i \in [\ell]$.
The fact that $f(S) \geq k$ guarantees that this set $S''$ satisfies $|S''| \geq k$, hence Item~\ref{itm:4} returns a valid solution.

To obtain the desired set $S \subseteq [n]$ with $f(S) \geq k$, our algorithm maintains a partial choice $x \in \{0,1,\ast\}^n$ satisfying $\phi(x) \geq k$.
We start with $x = (\ast, \ldots, \ast)$, for which the analysis of the randomized algorithm guarantees that $\phi(x) \geq k$, provided that $n \geq c \cdot k$ for a sufficiently large constant $c$. We then choose the entries of $x$, one by one, to be either $0$ or $1$.
In the $i$th iteration, in which $x_1, \ldots, x_{i-1} \in \{0,1\}$, the algorithm evaluates $\phi$ at the two partial choices $x_{i \leftarrow 0} = (x_1, \ldots, x_{i-1},0,\ast,\ldots,\ast)$ and $x_{i \leftarrow 1} = (x_1, \ldots, x_{i-1},1,\ast,\ldots,\ast)$, and continues to the next iteration with one of them which maximizes the value of $\phi$. By the law of total expectation, it holds that $\phi(x) = p \cdot \phi(x_{i \leftarrow 1}) + (1-p) \cdot \phi(x_{i \leftarrow 0})$, implying that the choice of the algorithm preserves the inequality $\phi(x) \geq k$.
At the end of the process, we get a vector $x \in \{0,1\}^n$ with $\phi(x) \geq k$, which fully determines the desired set $S$ with $f(S) \geq k$.
Since the evaluations of $\phi$ can be calculated in time polynomial in $n$, the algorithm can be implemented in polynomial time.
This completes the proof.
\end{proof}

Given the above result, it would be interesting to determine the smallest constant $c$ for which the $\Uncovered$ problem can be solved efficiently on instances with $n \geq c \cdot k$. Of particular interest is the restriction of the problem to instances with $n=3k$ and with pairwise disjoint sets of size $3$, because as follows from Proposition~\ref{prop:CT<Un}, an efficient algorithm for this restriction would imply an efficient algorithm for the $\CT$ problem.
Interestingly, it turns out that the restriction of the $\Uncovered$ problem to instances with $n=4k$ and with pairwise disjoint sets of size $4$ does admit an efficient algorithm. This is a consequence of the following result derived from an argument of Alon~\cite{Alon92strong} (see also~\cite{AlonFair22}). We include its quick proof for completeness.

\begin{proposition}\label{prop:|V_i|=4}
There exists a polynomial-time algorithm that given an integer $k$ and a partition of $[4k]$ into $k$ subsets $V_1, \ldots, V_k$ with $|V_i|=4$ for all $i \in [k]$, finds a partition of $[4k]$ into four stable $k$-subsets $S_1,S_2,S_3,S_4$ of $[4k]$ such that $|S_j \cap V_i| = 1$ for all $j \in [4]$ and $i \in [k]$.
\end{proposition}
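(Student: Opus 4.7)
The plan is to reformulate the task as finding a proper $4$-coloring of the cycle $C_{4k}$ (with the natural cyclic adjacency on $[4k]$) whose restriction to every $V_i$ is a bijection onto $[4]$, and then to invoke a constructive version of Alon's strong-chromatic-number argument from~\cite{Alon92strong}.

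First, I would observe the equivalence: given such a coloring $c:[4k]\to[4]$, the preimages $S_j = c^{-1}(j)$ are pairwise disjoint stable subsets of $[4k]$ (because $c$ is proper on the cycle) that together partition $[4k]$, and the bijection condition on each $V_i$ gives $|S_j\cap V_i|=1$ for all $i\in[k]$ and $j\in[4]$, which in particular forces $|S_j|=k$. So it suffices to construct such a coloring $c$ in polynomial time.

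Next, I would set up the polynomial method to establish existence. Consider
\[
P(x_1,\ldots,x_{4k}) \;=\; \prod_{j=1}^{4k}(x_j-x_{j+1}) \cdot \prod_{i=1}^{k}\prod_{\substack{a,b\in V_i\\ a<b}}(x_a-x_b),
\]
where the first product is taken with indices modulo $4k$. An assignment $x\in\{1,2,3,4\}^{4k}$ satisfies $P(x)\neq 0$ exactly when the induced coloring is proper on the cycle and rainbow on each $V_i$. The total degree of $P$ equals $4k+k\binom{4}{2}=10k$, and Alon's argument from~\cite{Alon92strong} identifies a monomial $\prod_j x_j^{e_j}$ of this degree, with every exponent $e_j\le 3$, whose coefficient in $P$ is nonzero. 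By the Combinatorial Nullstellensatz, such a coefficient witnesses an assignment $x\in\{1,2,3,4\}^{4k}$ with $P(x)\neq 0$, i.e., a valid coloring.

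Finally, I would turn this existence statement into a polynomial-time algorithm via the standard self-reducibility scheme: assign colors to the vertices $1,2,\ldots,4k$ one at a time, and at each step pick a color in $[4]$ that keeps the distinguished coefficient nonzero in the residual polynomial obtained after substituting the already-fixed values. The nullstellensatz guarantees that at least one of the four choices works at every step. The main obstacle will be implementing this bookkeeping in polynomial time; here I would exploit the fact that $P$ is a product of polynomially many binomial factors, each involving only two variables, so that the relevant coefficients can be maintained incrementally as substitutions are made. Once all $4k$ colors are fixed, outputting $S_j = c^{-1}(j)$ completes the proof.
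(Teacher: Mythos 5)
Your reformulation as finding a proper $4$-coloring of the cycle $C_{4k}$ that is rainbow on each $V_i$ is correct, and the polynomial $P$ of degree $10k$ is the natural one for a Combinatorial Nullstellensatz argument. However, the derandomization step contains a genuine gap. After substituting values for $x_1,\ldots,x_{j-1}$, the residual polynomial typically still has total degree close to $10k$: a factor $(x_a-x_b)$ becomes a constant only when \emph{both} endpoints have been assigned, so for small $j$ nearly all $10k$ factors still contribute degree one. Meanwhile the degree of the remaining part of the distinguished monomial, $\sum_{i\ge j} t_i$, has dropped to roughly $10k-2.5(j-1)$. The Combinatorial Nullstellensatz requires the certifying monomial to have degree \emph{equal} to the total degree of the polynomial, so the coefficient of $\prod_{i\ge j} x_i^{t_i}$ being nonzero in the residual polynomial certifies nothing, and ``pick a color that keeps the distinguished coefficient nonzero'' is not a justified greedy step. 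Indeed, if this scheme were sound it would apply verbatim to $\prod_{\text{cycle}}(x_i-x_{i+1})\cdot\prod_{\text{triangles}}(x_a-x_b)$ on $3k$ variables and solve Cycle-plus-Triangles, which the paper explicitly lists as open. You also leave unverified the key existence claim that $P$ has a monomial with all exponents at most $3$ and nonzero coefficient.

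The paper's proof sidesteps all of this with an elementary, transparently polynomial-time construction that exploits $4=2\times2$: split the cycle edges into the two matchings $M_1,M_2$; choose a perfect matching $M_3$ inside each $V_i$; $2$-color the bipartite graph $M_1\cup M_3$; let $M_4$ be the other perfect matching inside each $V_i$, joining same-colored vertices; $2$-color the bipartite graph $M_2\cup M_4$. The pair of $2$-colorings is a proper $4$-coloring of the cycle, rainbow on every $V_i$. It uses nothing beyond $2$-coloring unions of two matchings, and (because it relies on $4$ being a power of $2$) has no analogue for $3k$ --- exactly the structural fact your self-reducibility would have had to but cannot exploit.
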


\begin{proof}
Consider the algorithm that given an integer $k$ and a partition of $V = [4k]$ into $k$ subsets $V_1, \ldots, V_k$ with $|V_i|=4$ for all $i \in [k]$, acts as follows.

Let $M_1$ and $M_2$ be the two matchings of size $2k$ in the cycle on the vertex set $V$ with the natural order along the cycle.
Let $M_3$ be an arbitrary matching of size $2k$ on the vertex set $V$ that includes two edges with vertices in $V_i$ for each $i \in [k]$.
Consider the graph $G_1 = (V,M_1 \cup M_3)$. Since the edge set of $G_1$ is a union of two matchings, it has no odd cycles, hence it is $2$-colorable, and a $2$-coloring $c_1$ of $G_1$ can be found in polynomial time.
Notice that the edges of $M_3$ in $G_1$ guarantee that each color class of the coloring $c_1$ includes precisely two vertices from $V_i$ for each $i \in [k]$.

Now, let $M_4$ be the matching of size $2k$ on the vertex set $V$ that includes, for each $i \in [k]$, the two edges with vertices in $V_i$ whose endpoints have the same color according to $c_1$.
Consider the graph $G_2 = (V,M_2 \cup M_4)$. As before, $G_2$ is $2$-colorable, and a $2$-coloring $c_2$ of $G_2$ can be found in polynomial time.

Finally, the algorithm defines a $4$-coloring of the graph $(V,M_1 \cup M_2 \cup M_3 \cup M_4)$ by assigning to every vertex $v \in V$ the pair $(c_1(v),c_2(v))$.
Observe that the produced coloring is proper and that its color classes can be found efficiently.
By the edges of $M_1 \cup M_2$, each color class is a stable subset of $[4k]$, and by the edges of $M_3 \cup M_4$, each color class includes at most one vertex from each $V_i$.
This implies that the four color classes are stable $k$-subsets of $V$ that satisfy the assertion of the proposition, and we are done.
\end{proof}

\section{Unstable Sets}\label{sec:unstable}

In this section, we explore two subgraphs of the Kneser graph $K(n,k)$ induced by families of unstable $k$-subsets of $[n]$.
These subgraphs are defined as follows.
\begin{definition}\label{def:U(n,k)}
Let $n$ and $k$ be integers with $n \geq 2k$.
Let $\widetilde{U}(n,k)$ denote the subgraph of $K(n,k)$ induced by the family of all $k$-subsets of $[n]$ that include a pair of consecutive elements (where the elements $n$ and $1$ are {\em not} considered as consecutive for $n>2$).
Let $U(n,k)$ denote the subgraph of $K(n,k)$ induced by the family of all $k$-subsets of $[n]$ that include a pair of consecutive elements modulo $n$, i.e., the family of unstable $k$-subsets of $[n]$.
\end{definition}

\subsection{Chromatic Number}

We study now the chromatic numbers of the graphs $U(n,k)$ and $\widetilde{U}(n,k)$.
It is worth mentioning here that a result of Do{\lsoft}nikov~\cite{Dolnikov82} generalizes the lower bound of Lov{\'{a}}sz~\cite{LovaszKneser} on the chromatic number of $K(n,k)$ to general graphs, using a notion called colorability defect (see also~\cite[Chapter~3.4]{MatousekBook} and~\cite{Kriz92}). This generalization implies a tight lower bound of $n-2k+2$ on the chromatic number of $K(n,k)$ and a somewhat weaker lower bound of $n-4k+4$ on the chromatic number of $S(n,k)$ (see, e.g.,~\cite{MatousekZ04}). It turns out, though, that this generalized approach of~\cite{Dolnikov82} does not yield any meaningful bounds on the chromatic numbers of the graphs from Definition~\ref{def:U(n,k)}.

The following theorem determines the exact chromatic number of the graph $\widetilde{U}(n,k)$.

\begin{theorem}\label{thm:chi_tilde_U}
For all integers $n$ and $k$ with $n \geq 2k$,
\[\chi(\widetilde{U}(n,k)) = \min(n-2k+2,\lfloor n/2 \rfloor).\]
\end{theorem}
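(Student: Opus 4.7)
The upper bound comes from two explicit proper colorings. The first is inherited from the standard $(n-2k+2)$-coloring of $K(n,k)$ via Theorem~\ref{thm:chi_KorS(n,k)}, since $\widetilde{U}(n,k)$ is a subgraph. For the bound $\lfloor n/2 \rfloor$, I define $c:\widetilde{U}(n,k) \to [\lfloor n/2\rfloor]$ by setting $c(A) = \lceil i_A/2 \rceil$, where $i_A$ is the smallest index in $[n-1]$ with $\{i_A, i_A+1\} \subseteq A$. Properness follows from a one-line check: if $c(A)=c(B)=j$ then $i_A, i_B \in \{2j-1, 2j\}$, so both pairs $\{i_A, i_A+1\}$ and $\{i_B, i_B+1\}$ contain the element $2j$, forcing $2j \in A \cap B$ and showing $A,B$ are non-adjacent in $\widetilde{U}(n,k)$.

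For the lower bound I treat two regimes, according to which term attains the minimum. When $n-2k+2 \leq \lfloor n/2 \rfloor$ (equivalently $n \geq 4k-4$), the target $n-2k+2$ coincides with $\chi(K(n,k))$. The plan here is to show that every proper coloring $c:\widetilde{U}(n,k) \to [m]$ extends to a proper $m$-coloring of the full Kneser graph $K(n,k)$, yielding $m \geq n-2k+2$ by Lovász. The extension assigns each path-stable $k$-subset $A$ a color $c(A')$ for an unstable witness $A' \in \widetilde{U}(n,k)$ obtained from $A$ by a canonical local swap that introduces a consecutive pair (for instance, replacing the second-smallest element of $A$ by $a_1 + 1$). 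The main obstacle is to choose the swap rule so that disjoint path-stable sets $A, B$ yield disjoint witnesses $A', B'$; a naive rule can create overlaps at the newly introduced elements, so a careful case analysis (possibly branching on the relative positions of the smallest elements of $A$ and $B$) will be required.

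When $\lfloor n/2 \rfloor < n-2k+2$ (equivalently $n < 4k-4$), the target $\lfloor n/2 \rfloor$ exceeds the clique number $\lfloor n/k\rfloor$ of $\widetilde{U}(n,k)$ whenever $k \geq 3$, so a clique argument does not suffice; however, for $k=2$ the sets $\{2i-1,2i\}$ with $i \in [\lfloor n/2 \rfloor]$ do form a clique of the required size. For $k \geq 3$ in this regime I plan to use a topological argument in the spirit of Lovász's original proof of $\chi(K(n,k))$, applying the Borsuk--Ulam theorem to an appropriate complex associated with $\widetilde{U}(n,k)$ (the box complex or the neighborhood complex) and using the standard bound $\chi(G) \geq \mathrm{ind}_{\Z_2}(B(G)) + 2$. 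The analysis should parallel the treatment of $U(n,k)$ mentioned elsewhere in Section~\ref{sec:unstable}, where the cyclic structure of $[n]$ interacts with the consecutive-pair requirement to produce a $\Z_2$-index reflecting the parity partition. The principal obstacle, in both regimes, is the lower bound: verifying that the extension preserves properness in the first regime, and establishing the required $\Z_2$-index bound for the box complex in the second.
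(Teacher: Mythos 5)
Your upper bound is correct and essentially matches the paper's: you color $\widetilde{U}(n,k)$ by the index $\lceil i_A/2 \rceil$ of the first consecutive pair, while the paper colors by the minimal even element, but both are proper colorings with $\lfloor n/2 \rfloor$ colors, and both combine with $\chi(\widetilde{U}(n,k)) \leq \chi(K(n,k)) = n-2k+2$ to give the stated minimum.

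The lower bound, however, is not actually proven in your proposal, and it is where the real content of the theorem lies. You split into two regimes but carry out neither. (As a side note, the equivalences are also reversed: $n-2k+2 \leq \lfloor n/2 \rfloor$ holds roughly when $n \leq 4k-4$, not $n \geq 4k-4$.) In the regime where the target is $n-2k+2$, your plan is to extend a proper coloring of $\widetilde{U}(n,k)$ to one of $K(n,k)$ by mapping each path-stable $k$-set $A$ to an ``unstable witness'' $A'$ via a canonical local swap, so that disjoint $A,B$ yield disjoint $A',B'$. But you acknowledge yourself that this breaks for interleaved pairs (e.g., $A=\{1,3,5\}$, $B=\{2,4,6\}$ under any natural swap rule), and you never supply the promised case analysis. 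This is a substantive gap: what you need is, in effect, a homomorphism from the path-stable Kneser graph into $\widetilde{U}(n,k)$, and its existence is far from clear. In the regime where the target is $\lfloor n/2 \rfloor$, you correctly note that a clique argument only works for $k=2$, and then defer to ``a topological argument in the spirit of Lovász'' without producing one.

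The paper proves the lower bound in a single unified topological argument, without splitting into regimes. It proves a Gale-type lemma (Lemma~\ref{lemma:y_i_tilde_U}): placing $n$ points on the moment curve in $\Sset^t$ with $t = \min(n-2k+2, \lfloor n/2 \rfloor)-1$, every open hemisphere pair determined by a hyperplane through the origin has at least one side containing a vertex of $\widetilde{U}(n,k)$ (i.e., a $k$-subset with a consecutive pair). The key combinatorial step is that a hyperplane's $t$ ``on'' points split the remaining $n-t$ into two alternating groups, and since $n - t \geq t+2$ one of the groups must be large enough both to include two consecutive indices and to have size at least $k$. Then Borsuk--Ulam, applied exactly as in the Bárány/Schrijver proofs, yields the contradiction. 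You would do well to study how the choice of $t$ (which encodes both $n-2k+2$ and $\lfloor n/2 \rfloor$ simultaneously) makes the alternating-parts argument go through; that is the mechanism that replaces your two separate regimes with one lemma.
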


The proof of Theorem~\ref{thm:chi_tilde_U} relies on a topological argument.
It uses the following variant of the Borsuk--Ulam theorem (see, e.g.,~\cite[Ex.~2.1.6]{MatousekBook}).
Here, $\Sset^t$ stands for the $t$-dimensional unit sphere with respect to the Euclidean norm, that is, $\Sset^t = \{ x \in \R^{t+1} \mid \|x\|=1\}$.

\begin{theorem}\label{thm:BU}
If the $t$-dimensional sphere $\Sset^t$ is covered by $t+1$ sets $F_1, \ldots, F_{t+1}$, each $F_j$ open or closed, then there exist an index $j \in [t+1]$ and a point $x \in \Sset^t$ such that both $x$ and $-x$ belong to $F_j$.
\end{theorem}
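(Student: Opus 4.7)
My plan is to prove the covering theorem by contradiction, leveraging the classical Borsuk--Ulam theorem, which asserts that every continuous map $g: \Sset^t \to \R^t$ satisfies $g(x) = g(-x)$ for some $x \in \Sset^t$. Suppose toward a contradiction that no $F_j$ contains an antipodal pair, i.e., $F_j \cap (-F_j) = \emptyset$ for every $j \in [t+1]$, where $-F_j = \{-x \mid x \in F_j\}$.

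The first step is to reduce the mixed open/closed setting to the case in which all the covering sets are closed. For each open $F_j$ in the cover I would set
\[ F'_j = \{x \in \Sset^t \mid d(x, \Sset^t \setminus F_j) \geq \epsilon\}, \]
for a single sufficiently small $\epsilon > 0$, and leave each closed $F_j$ unchanged. By compactness of $\Sset^t$, there exists a Lebesgue-type constant $\epsilon > 0$ such that the collection of these closed sets still covers $\Sset^t$: every point $x$ lies in some $F_j$, and if $F_j$ is open then a small enough ball around $x$ sits inside $F_j$, forcing $x \in F'_j$ for a uniform $\epsilon$ extracted by a finite subcover argument. Since $F'_j \subseteq F_j$, each $F'_j$ inherits the property of containing no antipodal pair. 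Thus, it suffices to prove the statement under the assumption that all $F_j$ are closed.

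In the all-closed case, I would introduce the continuous map $f: \Sset^t \to \R^t$ defined coordinatewise by $f(x) = (d(x, F_1), d(x, F_2), \ldots, d(x, F_t))$, using the distance functions to the first $t$ sets. Applying the classical Borsuk--Ulam theorem to $f$ produces a point $x_0 \in \Sset^t$ with $f(x_0) = f(-x_0)$, so $d(x_0, F_j) = d(-x_0, F_j)$ for every $j \in [t]$. If $x_0 \in F_j$ for some $j \leq t$, then $d(x_0, F_j) = 0$ and hence $d(-x_0, F_j) = 0$; since $F_j$ is closed, this yields $-x_0 \in F_j$, contradicting the assumption that $F_j$ has no antipodal pair. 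By symmetry $-x_0 \notin F_j$ for any $j \leq t$, so neither $x_0$ nor $-x_0$ lies in $F_1 \cup \cdots \cup F_t$. Since the $F_j$'s cover $\Sset^t$, both $x_0$ and $-x_0$ must lie in $F_{t+1}$, giving the desired contradiction.

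The main obstacle I anticipate is making the open-to-closed reduction fully rigorous, specifically verifying that a single $\epsilon > 0$ can be chosen so that the shrunken open sets together with the original closed sets still cover $\Sset^t$. This hinges on a compactness argument that extracts a uniform interior margin for the open members of the cover, and once this technical point is settled the remainder of the proof is a clean and direct application of the classical Borsuk--Ulam theorem.
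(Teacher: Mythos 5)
The paper does not prove this theorem; it simply cites it as Exercise~2.1.6 in Matou\v{s}ek's book, so I evaluate your argument on its own merits.

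Your handling of the all-closed case is correct: applying the classical Borsuk--Ulam theorem to the continuous map $x \mapsto (d(x,F_1),\ldots,d(x,F_t))$ yields $x_0$ with $d(x_0,F_j)=d(-x_0,F_j)$ for all $j\le t$, and closedness of each $F_j$ then forces $x_0,-x_0 \in F_{t+1}$, contradicting the assumption that $F_{t+1}$ has no antipodal pair.

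The reduction from the mixed case to the all-closed case, however, is genuinely flawed, not merely in need of extra care. Replacing each open $F_j$ by $F'_j=\{x: d(x,\Sset^t\setminus F_j)\ge\epsilon\}$ for a single $\epsilon>0$ need not preserve the covering property once closed sets are present. Take $\Sset^1=F_1\cup F_2$ with $F_1$ a closed semicircle and $F_2=\Sset^1\setminus F_1$ the complementary open semicircle. Then $\Sset^1\setminus F_2=F_1$, and for any $\epsilon>0$ a point of $F_2$ within distance $\epsilon$ of $F_1$ lies in neither $F_1$ nor $F'_2$, so $\{F_1,F'_2\}$ fails to cover. The Lebesgue-number argument you invoke works for an all-open cover, but here the set of points not covered by any closed member is open and non-compact, and the margin $\max\{\,d(x,\Sset^t\setminus F_j): F_j\text{ open},\,x\in F_j\,\}$ tends to $0$ as $x$ approaches the boundary of a closed member. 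Worse, in this example \emph{no} closed subset of $F_2$ can complete a cover together with $F_1$, so no choice of closed $F'_j\subseteq F_j$ can repair a reduction of this shape; the obstruction is structural, not technical.

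The standard way around this is to dispense with the reduction and feed both kinds of sets directly into the Borsuk--Ulam map. Define $f:\Sset^t\to\R^t$ by $f_j(x)=d(x,F_j)$ when $F_j$ is closed and $f_j(x)=d(x,\Sset^t\setminus F_j)$ when $F_j$ is open, for $j\in[t]$. Given $x_0$ with $f(x_0)=f(-x_0)$: if $F_j$ is closed and $x_0\in F_j$, then $d(-x_0,F_j)=0$ and closedness gives $-x_0\in F_j$; if $F_j$ is open and $x_0\in F_j$, then $d(x_0,\Sset^t\setminus F_j)>0$, hence $d(-x_0,\Sset^t\setminus F_j)>0$, and since $\Sset^t\setminus F_j$ is closed this gives $-x_0\in F_j$. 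Either case contradicts the no-antipodal-pair hypothesis, so neither $x_0$ nor $-x_0$ lies in any $F_j$ with $j\le t$, and both must lie in $F_{t+1}$, which works regardless of whether $F_{t+1}$ is open or closed.
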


Another crucial ingredient in the proof of Theorem~\ref{thm:chi_tilde_U} is the following lemma.
It is inspired by a lemma of Gale~\cite{Gale56} that was applied in~\cite{SchrijverKneser78} to determine the chromatic number of $S(n,k)$ (see also~\cite[Chapter~3.5]{MatousekBook}).
Here, a hyperplane $h$ in $\R^{t+1}$ that passes through the origin is defined as the set $\{ z \in \R^{t+1} \mid \langle x,z \rangle = 0\}$ for some $x \in \Sset^t$, and the two open hemispheres that $h$ determines are $\{ z \in \Sset^t \mid \langle x,z\rangle >0 \}$ and $\{ z \in \Sset^t \mid \langle x,z\rangle <0 \}$.

\begin{lemma}\label{lemma:y_i_tilde_U}
For integers $n$ and $k$ with $n \geq 2k$, let $t = \min(n-2k+2,\lfloor n/2 \rfloor)-1$.
Then, there exist $n$ points $y_1,\ldots,y_n \in \Sset^t$ such that for every hyperplane $h$ in $\R^{t+1}$ that passes through the origin, at least one of the two open hemispheres that $h$ determines contains the points of $\{y_i \mid i \in A\}$ for some vertex $A$ of $\widetilde{U}(n,k)$.
\end{lemma}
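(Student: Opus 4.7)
The plan is to construct the $y_i$'s explicitly, splitting into two cases based on which term realizes $t+1 = \min(n-2k+2, \lfloor n/2\rfloor)$. In the first regime, $t+1 = n-2k+2 \le \lfloor n/2\rfloor$, I would place the $y_i$'s on the normalized moment curve, $y_i = \mu(t_i)/\|\mu(t_i)\|$ with $\mu(t) = (1, t, t^2, \ldots, t^{n-2k+1}) \in \R^{t+1}$ and generic reals $t_1 < \cdots < t_n$. For any $x \in \Sset^t$, the polynomial $p(t) = \langle x, \mu(t)\rangle$ has degree at most $n-2k+1$, so the sign sequence $s_i = \sign(p(t_i))$ has at most $n-2k+1$ zeros and its non-zero subsequence admits at most $n-2k+1$ sign changes. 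The key observation is that if neither $I^+ = \{i : s_i > 0\}$ nor $I^- = \{i : s_i < 0\}$ contained two consecutive indices of $[n-1]$, then the non-zero subsequence would alternate between $+$ and $-$, forcing at least $(n - |I^0|) - 1$ sign changes and hence $|I^0| \ge 2k-2$; combined with $|I^0| \le n-2k+1$ this yields $n \ge 4k-3$, contradicting the assumption $n-2k+2 \le \lfloor n/2\rfloor$ of this regime. A similar polynomial-degree argument rules out the sub-case in which the side containing a consecutive pair has fewer than $k$ elements while the larger side has no such pair.

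In the second regime, $t+1 = \lfloor n/2\rfloor < n-2k+2$, I would use a paired construction on $\Sset^{\lfloor n/2\rfloor - 1}$: for $n = 2m$, set $y_{2i-1} = y_{2i} = v_i$ for $i \in [m]$, with one extra point added when $n$ is odd. The point of pairing is that every maximal $+$-run and every maximal $-$-run of the sign sequence has length at least two, so whenever $I^\pm$ is non-empty it automatically contains a consecutive pair $\{2i-1, 2i\}$. It then suffices to choose the $v_i \in \Sset^{m-1}$ so that for every $x \in \Sset^{m-1}$ one of the sets $J^\pm(x) = \{i : \langle x, v_i\rangle \gtrless 0\}$ has size at least $\lceil k/2\rceil$, giving $|I^\pm| \ge 2\lceil k/2\rceil \ge k$. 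I would obtain such $v_i$'s by setting $v_i = (\mu(t_i)/\|\mu(t_i)\|,\, \delta w_i)$ (and then normalizing), where the first block lies on a moment curve in $\R^{m - 2\lceil k/2\rceil + 2}$ (so the polynomial-degree bound gives the needed size bound whenever the first block of $x$ is non-trivial) and the perturbation vectors $w_i \in \R^{2\lceil k/2\rceil - 2}$ are chosen to satisfy an analogous hemispherical size bound on $\Sset^{2\lceil k/2\rceil - 3}$, which handles the degenerate $x$'s supported only on the perturbation coordinates.

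The main technical obstacle is the run-structure case analysis in the first regime, since zeros of $p$ at the evaluation points can split what would otherwise be a single same-sign run in $s$ into several smaller runs; one must therefore carefully relate the runs of $s$ to those of its non-zero subsequence $s^\ast$ when verifying that one of $I^\pm$ simultaneously has at least $k$ elements and contains a pair of consecutive positions of $[n-1]$. In the second regime, the obstacle is the careful design of the perturbation vectors $w_i$, which may itself require invoking a smaller Gale-type configuration on $\Sset^{2\lceil k/2\rceil - 3}$ so that the hemispherical bound $\max(|J^+|, |J^-|) \ge \lceil k/2\rceil$ holds uniformly over the full sphere $\Sset^{m-1}$.
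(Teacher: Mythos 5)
Your regime~1 ``key observation''---that if neither $I^+$ nor $I^-$ contains a pair of adjacent indices then the non-zero subsequence of $s$ must alternate---is false. Take $s = (+,0,+,-,0,-)$: neither $I^+ = \{1,3\}$ nor $I^- = \{4,6\}$ contains adjacent indices, yet the non-zero subsequence $(+,+,-,-)$ has only one sign change, not five. Zeros of $p$ can split a single sign run into several non-adjacent fragments without introducing any sign changes, so the absence of an adjacent pair on either side does not force alternation in $s^\ast$, and the degree bound on sign changes then gives you nothing. You flag this issue yourself in the last paragraph as an ``obstacle,'' but you never close it, and as stated the inference $|I^0|\geq 2k-2$ simply does not follow.

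The paper both repairs this and shows that your split into two regimes is unnecessary. It uses a single construction in all cases---$w_i=\gamma(i)$ with $\gamma(x)=(1,x,\ldots,x^t)$ and $t=\min(n-2k+2,\lfloor n/2\rfloor)-1$---and, rather than counting sign changes, first reduces by a standard perturbation argument to the situation where exactly $t$ of the $w_i$ lie on the hyperplane. Those $t$ zero positions partition $[n]$ into exactly $t+1$ contiguous intervals with alternating signs, so one side is assigned $\lceil(t+1)/2\rceil$ intervals and the other $\lfloor(t+1)/2\rfloor$. Because $t\leq\lfloor n/2\rfloor-1$ one has $n-t>t+1$, from which one checks that the side with more elements has strictly more elements than it has intervals, so by pigeonhole two of its elements share an interval and are therefore consecutive integers; and because $t\leq n-2k+1$ that side has at least $\lceil(n-t)/2\rceil\geq k$ elements. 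Both constraints on $t$ come straight out of the $\min$ in its definition, so no regime split ever arises.

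Consequently your regime~2 paired construction ($y_{2i-1}=y_{2i}=v_i$, plus a perturbation block designed by a further nested Gale-type configuration, plus special handling of odd $n$) solves a problem that does not exist, and it is also the most fragile part of your plan. In short: regime~1 needs the interval-counting repair sketched above, and regime~2 can be deleted outright.
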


\begin{proof}
Let $n$, $k$, and $t$ be integers as in the statement of the lemma.
Let $\gamma: \R \rightarrow \R^{t+1}$ denote the function defined by
\[ \gamma(x) = (1,x,x^2,\ldots,x^t). \]
For every $i \in [n]$, consider the point $w_i = \gamma(i) \in \R^{t+1}$.
We prove that for every hyperplane $h$ in $\R^{t+1}$ that passes through the origin, at least one of the two open half-spaces that $h$ determines contains the points of $\{w_i \mid i \in A\}$ for some vertex $A$ of $\widetilde{U}(n,k)$. This will immediately imply that the points $y_1, \ldots, y_n$ defined by $y_i = w_i/\|w_i\|$ for $i \in [n]$, which all lie on $\Sset^t$, satisfy the assertion of the lemma.

Consider an arbitrary hyperplane $h$ in $\R^{t+1}$ that passes through the origin.
Every point $w_i$ either lies on $h$ or belongs to one of the two open half-spaces determined by $h$.
Let $W_{on}$ denote the set of indices $i \in [n]$ for which the point $w_i$ lies on $h$, and let $W_1$ and $W_2$ denote the sets of indices $i \in [n]$ of the points $w_i$ that belong to the two open half-spaces determined by $h$. Our goal is to show that at least one of the sets $W_1$ and $W_2$ contains a vertex of $\widetilde{U}(n,k)$. To do so, one has to show that at least one of them includes two consecutive elements and has size at least $k$.

The definition of the points $w_1,\ldots,w_n$ implies that the size of $W_{on}$ does not exceed the number of roots of some nonzero polynomial of degree at most $t$, hence
$|W_{on}| \leq t$.
In fact, we may and will assume that $|W_{on}|=t$, as otherwise it is possible to continuously move $h$ so that it will satisfy this property and no $w_i$ will cross from one side to the other (see, e.g.,~\cite[Lemma~3.5.1]{MatousekBook}).
Now, the points $w_i$ with $i \in W_{on}$ divide the image $\Image(\gamma)$ of the function $\gamma$ into $t+1$ open continuous parts that alternate between the two open half-spaces determined by $h$.
Observe that all the indices $i$ of the points $w_i$ that belong to every such continuous part are either in $W_1$ or in $W_2$.
Suppose without loss of generality that $W_1$ corresponds to the points $w_i$ that belong to $\lceil \tfrac{t+1}{2} \rceil$ of the parts of $\Image(\gamma)$, and thus $W_2$ corresponds to the points $w_i$ that belong to the other $\lfloor \tfrac{t+1}{2} \rfloor$ parts of $\Image(\gamma)$.

In order to prove that $W_1$ (or $W_2$) includes two consecutive elements, it suffices to show that its size exceeds the number of parts of $\Image(\gamma)$ associated with it.
We first observe that at least one of the sets $W_1$ and $W_2$ satisfies this requirement, that is,
\begin{eqnarray}\label{eq:|W1|,|W2|}
|W_1| > \Big \lceil \frac{t+1}{2} \Big \rceil ~~~\mbox{or}~~~ |W_2| > \Big \lfloor \frac{t+1}{2} \Big \rfloor.
\end{eqnarray}
To see this, assume for the sake of contradiction that both the inequalities in~\eqref{eq:|W1|,|W2|} do not hold.
It follows that
\[n-t = |W_1|+|W_2| \leq \Big \lceil \frac{t+1}{2} \Big \rceil + \Big \lfloor \frac{t+1}{2} \Big \rfloor = t+1,\]
which implies that $n \leq 2 t +1$.
This, however, contradicts the definition of $t$ which guarantees that $t \leq \lfloor n/2 \rfloor -1$.

Given that at least one of the inequalities in~\eqref{eq:|W1|,|W2|} holds, it is easy to see that in the case where $|W_1| \leq |W_2|$, the first inequality in~\eqref{eq:|W1|,|W2|} implies the second, hence the second inequality in~\eqref{eq:|W1|,|W2|} necessarily holds.
Further, in the case where $|W_1| > |W_2|$, since the right-hand side of the two inequalities in~\eqref{eq:|W1|,|W2|} differ by at most $1$, the first inequality in~\eqref{eq:|W1|,|W2|} necessarily holds.
It thus follows that for at least one of the sets $W_1$ and $W_2$, its size is equal to $\max(|W_1|,|W_2|)$ and is strictly larger than the number of parts of $\Image(\gamma)$ associated with it.
This implies that this set includes two consecutive elements.
By the definition of $t$, we have $t \leq n-2k+1$, which implies that the size of the set satisfies
\begin{eqnarray}\label{eq:max_W1_W2}
\max(|W_1|,|W_2|) \geq \Big \lceil \frac{n-t}{2} \Big \rceil \geq \Big \lceil \frac{2k-1}{2} \Big \rceil = k.
\end{eqnarray}
This completes the proof.
\end{proof}

We are ready to prove Theorem~\ref{thm:chi_tilde_U}.

\begin{proof}[ of Theorem~\ref{thm:chi_tilde_U}]
For the upper bound, apply first Theorem~\ref{thm:chi_KorS(n,k)} to obtain that \[\chi(\widetilde{U}(n,k)) \leq \chi(K(n,k)) = n-2k+2.\]
Next, since every vertex of $\widetilde{U}(n,k)$ includes two consecutive elements, it must include an even element. By assigning to every such vertex its minimal even element, we obtain a proper coloring of $\widetilde{U}(n,k)$ with $\lfloor n/2 \rfloor$ colors, hence $\chi(\widetilde{U}(n,k)) \leq \lfloor n/2 \rfloor$. This completes the proof of the upper bound.

The lower bound relies on the Borsuk--Ulam theorem (Theorem~\ref{thm:BU}). Let \[t = \min (n-2k+2, \lfloor n/2 \rfloor) - 1,\] and suppose for the sake of contradiction that there exists a proper coloring of $\widetilde{U}(n,k)$ with $t$ colors. Let $y_1,\ldots,y_n \in \Sset^t$ denote the points given by Lemma~\ref{lemma:y_i_tilde_U}. We define $t$ sets $F_1,\ldots,F_t \subseteq \Sset^t$ as follows. A point $x \in \Sset^t$ is included in $F_j$ with $j \in [t]$ if there exists a vertex $A$ of $\widetilde{U}(n,k)$ colored $j$ such that $\{ y_i \mid i \in A\} \subseteq H(x)$, where $H(x)$ is the open hemisphere centered at $x$. We further define $F_{t+1} = \Sset^t \setminus (F_1 \cup \cdots \cup F_t)$. Note that the sets $F_1,\ldots,F_{t+1}$ cover $\Sset^t$. Note further that $F_1,\ldots,F_t$ are open whereas $F_{t+1}$ is closed.

By Theorem~\ref{thm:BU}, there exist an index $j \in [t+1]$ and a point $x \in \Sset^t$ such that both $x$ and $-x$ belong to $F_j$.
If $j \in [t]$, then it follows from the definition of $F_j$ that there exist two vertices of $\widetilde{U}(n,k)$ with color $j$ that correspond to disjoint sets, contradicting the assumption that the given coloring is proper.
If $j = t+1$ then neither $H(x)$ nor $H(-x)$ contains $\{ y_i \mid i \in A\}$ for a vertex $A$ of $\widetilde{U}(n,k)$, contradicting Lemma~\ref{lemma:y_i_tilde_U}.
This completes the proof.
\end{proof}

We derive the following result on the chromatic number of $U(n,k)$.

\begin{corollary}\label{cor:chiU(n,k)}
For all integers $n$ and $k$ with $n \geq 2k$,
\[\min(n-2k+2,\lfloor n/2 \rfloor) \leq \chi(U(n,k)) \leq \min(n-2k+2,\lceil n/2 \rceil).\]
\end{corollary}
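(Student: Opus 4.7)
The plan is to derive both bounds almost immediately from Theorem~\ref{thm:chi_tilde_U} together with the elementary coloring construction already sketched in the introduction of Section~\ref{sec:Schrijver_and_Unstable} (i.e., just before Theorem~\ref{thm:Schr<Kneser}).

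For the upper bound, I would observe two proper colorings of $U(n,k)$. First, since $U(n,k)$ is an induced subgraph of $K(n,k)$, Theorem~\ref{thm:chi_KorS(n,k)} yields $\chi(U(n,k)) \leq \chi(K(n,k)) = n-2k+2$. Second, every vertex $A$ of $U(n,k)$ contains a pair of cyclically consecutive elements $\{i,i+1 \bmod n\}$, and any such pair contains an odd element of $[n]$ (either $i$ or $i+1$ is odd; in the wrap-around case $\{n,1\}$ either $1$ or $n$ is odd). Hence each vertex $A$ contains at least one odd element, and the coloring that assigns to $A$ its minimal odd element is proper: if two disjoint sets $A,B$ received the same color $2i-1$, the element $2i-1$ would lie in $A \cap B$, contradicting disjointness. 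The number of colors used is at most $\lceil n/2 \rceil$, yielding $\chi(U(n,k)) \leq \lceil n/2 \rceil$. Combining both bounds gives $\chi(U(n,k)) \leq \min(n-2k+2,\lceil n/2 \rceil)$.

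For the lower bound, I would note that $\widetilde{U}(n,k)$ is an induced subgraph of $U(n,k)$: every $k$-subset of $[n]$ that contains a pair of linearly consecutive elements certainly contains a pair of cyclically consecutive elements, so every vertex of $\widetilde{U}(n,k)$ is a vertex of $U(n,k)$, and the adjacency relation (disjointness) is inherited. Consequently $\chi(U(n,k)) \geq \chi(\widetilde{U}(n,k))$, and the latter equals $\min(n-2k+2,\lfloor n/2 \rfloor)$ by Theorem~\ref{thm:chi_tilde_U}. Putting the two bounds together completes the proof.

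There is no real obstacle here; the nontrivial work has already been done in Theorem~\ref{thm:chi_tilde_U}. The only point to verify carefully is that $\widetilde{U}(n,k)$ genuinely sits inside $U(n,k)$ as an induced subgraph and that the ``smallest odd element'' rule yields a proper coloring of $U(n,k)$ on $\lceil n/2 \rceil$ colors, both of which are immediate.
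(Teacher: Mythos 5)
Your proposal is correct and follows essentially the same route as the paper: bound $\chi(U(n,k))$ above by $\chi(K(n,k)) = n-2k+2$ and by the $\lceil n/2\rceil$-coloring via minimal odd element, and bound it below by $\chi(\widetilde{U}(n,k)) = \min(n-2k+2,\lfloor n/2\rfloor)$ using that $\widetilde{U}(n,k)$ is an induced subgraph of $U(n,k)$.
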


\begin{proof}
For the upper bound, apply first Theorem~\ref{thm:chi_KorS(n,k)} to obtain that
\[\chi(U(n,k)) \leq \chi(K(n,k)) = n-2k+2.\]
Next, since every vertex of $U(n,k)$ includes two consecutive elements modulo $n$, it must include an odd element. By assigning to every such vertex its minimal odd element, we obtain a proper coloring of $U(n,k)$ with $\lceil n/2 \rceil$ colors, hence $\chi(U(n,k)) \leq \lceil n/2 \rceil$.
This completes the proof of the upper bound.
The lower bound follows by combining Theorem~\ref{thm:chi_tilde_U} with the fact that $\widetilde{U}(n,k)$ is an induced subgraph of $U(n,k)$.
\end{proof}

We conclude this section with a discussion on the tightness of Corollary~\ref{cor:chiU(n,k)}.
Notice that the upper and lower bounds provided in Corollary~\ref{cor:chiU(n,k)} coincide whenever the integer $n$ is even or satisfies $n \leq 4k-4$.
For other values of $n$ and $k$ the two bounds differ by $1$.
Yet, it turns out that the proof technique of Theorem~\ref{thm:chi_tilde_U} can be used to show that the upper bound in Corollary~\ref{cor:chiU(n,k)} is tight for all integers $n$ that are congruent to $1$ modulo $4$. We provide the details in Appendix~\ref{app:1mod4}. This leaves us with a gap of $1$ between the upper and lower bounds in Corollary~\ref{cor:chiU(n,k)} only for those integers $n$ and $k$, where $n$ is congruent to $3$ modulo $4$ and satisfies $n \geq 4k-1$.

We further observe that for an odd integer $n$ and for every proper coloring of $U(n,k)$ that includes a trivial color class (all of whose members share a common element), the number of used colors is at least the upper bound in Corollary~\ref{cor:chiU(n,k)}. Indeed, the restriction of such a coloring to the vertices that do not include the common element of the trivial color class is a proper coloring of a graph isomorphic to $\widetilde{U}(n-1,k)$, so by Theorem~\ref{thm:chi_tilde_U} it uses at least $\min(n-2k+1,(n-1)/2)$ colors. Together with the additional color of the trivial color class, the total number of colors is at least $\min(n-2k+2, \lceil n/2 \rceil)$, as claimed.

\subsection{Independence Number}

We next determine the largest size of an independent set in the graph $U(n,k)$.
The proof uses the Hilton--Milner theorem (Theorem~\ref{thm:HM}).

\begin{theorem}\label{thm:alphaU(n,k)}
For all integers $k \geq 2$ and $n \geq 2k$, it holds that
\[ \alpha(U(n,k)) = \binom{n-1}{k-1} - \binom{n-k-1}{k-1}.\]
\end{theorem}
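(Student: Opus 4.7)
For the lower bound, fix any $i \in [n]$ and let $\calF_i$ denote the family of all unstable $k$-subsets of $[n]$ containing $i$. Every two members of $\calF_i$ share $i$, so $\calF_i$ is an intersecting family, and hence an independent set in $U(n,k)$. By Lemma~\ref{lemma:Talbot}, the number of stable $k$-subsets of $[n]$ containing $i$ is $\binom{n-k-1}{k-1}$, while the total number of $k$-subsets of $[n]$ containing $i$ is $\binom{n-1}{k-1}$, giving $|\calF_i| = \binom{n-1}{k-1} - \binom{n-k-1}{k-1}$.

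For the matching upper bound, let $\calF$ be an arbitrary intersecting family of unstable $k$-subsets of $[n]$. If $\calF$ is trivial, i.e., all its members contain a common element $i$, then $\calF \subseteq \calF_i$ and the bound follows. The case $k = 2$ admits a short direct verification (the only intersecting families of consecutive pairs are trivial), so assume $k \geq 3$ and $\calF$ is non-trivial. The Hilton--Milner theorem (Theorem~\ref{thm:HM}) then yields $|\calF| \leq \binom{n-1}{k-1} - \binom{n-k-1}{k-1} + 1$, and the remaining task is to preclude equality.

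Suppose first $n > 2k$, so the extremal Hilton--Milner families are characterized. For the first structure $\{F \in \binom{[n]}{k} : i \in F,\, F \cap A \neq \emptyset\} \cup \{A\}$, every stable $k$-subset containing $i$ must be disjoint from $A$ (otherwise such a set would be a stable member of $\calF$). The plan is to show, via a direct counting argument on the two arcs obtained by removing $i, j$ and their cyclic neighbors, that for $n \geq 2k+1$ and every $j \in [n] \setminus \{i-1, i, i+1 \pmod n\}$, the pair $\{i, j\}$ extends to a stable $k$-subset (the two arcs have total usable length at least $n - 6 \geq 2k - 5$, into which a stable $(k-2)$-subset always fits). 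Consequently $A \subseteq \{i-1, i+1\}$, contradicting $|A| = k \geq 3$. For the second structure $\{F \in \binom{[n]}{3} : |F \cap A| \geq 2\}$ (applicable only when $k = 3$), if some pair $\{a, b\} \subseteq A$ is not cyclically consecutive then unstability of $\{a, b, x\}$ for every $x \in [n] \setminus A$ forces $[n] \setminus A \subseteq \{a-1, a+1, b-1, b+1\}$, so $n \leq 7$; the single remaining value $n = 7$ is ruled out by a brief check of all placements of the 3-set $A$.

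The boundary case $n = 2k$ is handled directly. The graph $K(2k, k)$ is a perfect matching pairing each $k$-subset with its complement, so any intersecting family has size at most $\binom{2k}{k}/2 = \binom{2k-1}{k-1}$. The unique complementary pair consisting of stable $k$-subsets is $\{\{1, 3, \ldots, 2k-1\}, \{2, 4, \ldots, 2k\}\}$, so $\calF$ cannot include an unstable representative from this pair, losing one and giving $|\calF| \leq \binom{2k-1}{k-1} - 1$, as needed. The main obstacle I anticipate is the first extremal-structure case, specifically the claim that stable $k$-subsets through $i$ cover all of $[n] \setminus \{i-1, i+1 \pmod n\}$ once $n \geq 2k+1$; the remaining verifications reduce to short combinatorial case analyses.
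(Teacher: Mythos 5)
Your proposal is correct and follows the paper's overall strategy: the lower bound via a trivial family counted with Lemma~\ref{lemma:Talbot}; the $n=2k$ case via the perfect-matching structure of $K(2k,k)$ and the unique complementary pair of stable sets; and for $n>2k$ the Hilton--Milner theorem together with an argument that its extremal families necessarily contain a stable $k$-subset, which is impossible for an independent set of $U(n,k)$. The one place you genuinely diverge is how that last contradiction is reached. The paper exhibits explicit stable sets in the Hilton--Milner family (namely $\{1,3,\ldots,2k-1\}$, $\{1,4,6,\ldots,2k\}$, and a variant $\{1,3,\ldots,2k-3,j\}$ for $j>2k$, plus a small case analysis on $A$ when $k=3$). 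You instead prove the structural claim that for every $j$ outside the cyclic neighborhood of $i$, the pair $\{i,j\}$ extends to a stable $k$-subset, which forces $A\subseteq\{i-1,i+1\}$ and hence $|A|\le 2<k$; and for the secondary $k=3$ structure you deduce $n\le 7$ from the non-consecutive pair in $A$ rather than enumerating shapes of $A$. Both routes are sound. The paper's version is more hands-on and avoids any arc-counting; yours is somewhat more conceptual and localizes the whole first-structure case to a single extension lemma. The step you flag as the main risk does go through: deleting $i$, $j$, and their at most four distinct cyclic neighbors leaves one or two arcs of total length at least $n-6$, into which a stable $(k-2)$-subset of total size at least $\lceil (n-6)/2\rceil \ge k-2$ fits precisely when $n\ge 2k+1$, the range in question.
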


\begin{proof}
Fix some integers $k \geq 2$ and $n \geq 2k$.
We first observe that there exists an independent set in $U(n,k)$ with the required size. This follows by considering the family of all vertices of $U(n,k)$ that include some fixed element $i \in [n]$. This family is clearly intersecting and thus forms an independent set in $U(n,k)$. The number of $k$-subsets of $[n]$ that include $i$ is $\binom{n-1}{k-1}$, where by Lemma~\ref{lemma:Talbot}, $\binom{n-k-1}{k-1}$ of them are stable and therefore do not form vertices of $U(n,k)$. Therefore,
\[\alpha(U(n,k)) \geq \binom{n-1}{k-1} - \binom{n-k-1}{k-1}.\]

We now prove the upper bound.
We start with the simple case of $n=2k$.
Here, the edges of the Kneser graph $K(n,k)$ form a perfect matching in which every edge connects a set to its complement. Any independent set in $U(n,k)$ includes at most one vertex from every edge in this matching. However, the set of the odd elements of $[n]$ and the set of the even elements of $[n]$ are adjacent in $K(n,k)$ but are not vertices of $U(n,k)$.
This implies that for $n=2k$, it holds that
\[\alpha(U(n,k)) \leq \frac{1}{2} \cdot \binom{2k}{k}-1 = \binom{2k-1}{k-1}-1.\]
This coincides with the required bound.

Suppose now that $k >3$ and $n > 2k$.
Let $\calF$ be a family of $k$-subsets of $[n]$ that forms an independent set in $U(n,k)$.
Since $U(n,k)$ is an induced subgraph of $K(n,k)$, it follows that $\calF$ is an intersecting family.
If $\calF$ is trivial, then its size does not exceed the number of vertices in $U(n,k)$ that include some fixed element $i \in [n]$, hence $|\calF| \leq \binom{n-1}{k-1} - \binom{n-k-1}{k-1}$, and we are done.
So suppose that $\calF$ is  non-trivial, and assume for the sake of contradiction that its size satisfies $|\calF| \geq \binom{n-1}{k-1} - \binom{n-k-1}{k-1} +1$.
By the Hilton--Milner theorem, stated as Theorem~\ref{thm:HM}, it follows that $|\calF| = \binom{n-1}{k-1} - \binom{n-k-1}{k-1} +1$. Moreover, using $k >3$, it follows that there exist an element $i \in [n]$ and a $k$-subset $A$ of $[n]$ with $i \notin A$ such that $\calF = \{ F \in \binom{[n]}{k}~\mid~i \in F,~F \cap A \neq \emptyset\} \cup \{A\}$.

To obtain a contradiction, it suffices to show that such a family includes a set that does not form a vertex of $U(n,k)$.
So assume without loss of generality that $i=1$, and consider the following two $k$-subsets of $[n]$: $\{1,3,5,\ldots,2k-1\}$ and $\{1,4,6,\ldots,2k\}$. By $n > 2k$, neither of them is a vertex of $U(n,k)$. If there exists an element $j \in A$ satisfying $3 \leq j \leq 2k$, then at least one of them belongs to $\calF$, hence $\calF$ includes a set that does not form a vertex of $U(n,k)$, as required.
Otherwise, using $k \geq 3$ and $1 \notin A$, there exists an element $j \in A$ satisfying $2k < j < n$. It thus follows that the set $\{1,3,5,\ldots,2k-3,j\}$ belongs to $\calF$ but does not form a vertex of $U(n,k)$, as required.

We turn to the case in which $k=3$ and $n > 6$.
The proof for this case follows the argument presented above for $k>3$. The only difference is that while assuming in contradiction that the independent set $\calF$ satisfies $|\calF| = \binom{n-1}{k-1} - \binom{n-k-1}{k-1} +1$, the Hilton--Milner theorem provides another possible structure for $\calF$, namely, $\calF = \{F \in \binom{[n]}{3}~\mid~|F \cap A| \geq 2\}$ for some $3$-subset $A$ of $[n]$. It thus remains to show that such a family must include a set that does not form a vertex of $U(n,3)$.
Note that $A \in \calF$. Therefore, if $A$ is not a vertex of $U(n,3)$ then we are done.
Otherwise, it can be assumed without loss of generality that $A = \{1,2,j\}$ for some $3 \leq j < n$.
For $j=3$, the set $\{1,3,5\}$ belongs to $\calF$ but does not form a vertex of $U(n,3)$.
For $j=4$, the set $\{1,4,6\}$ belongs to $\calF$ but does not form a vertex of $U(n,3)$.
And finally, for $j \geq 5$, the set $\{1,3,j\}$ belongs to $\calF$ but does not form a vertex of $U(n,3)$, so we are done.

We are left with the case of $k=2$.
Here, the graph $U(n,k)$ is isomorphic to the complement of a cycle on $n$ vertices. It is not difficult to check that for $n \geq 4$, it holds that $\alpha(U(n,2)) = 2$, which coincides with the stated result. This completes the proof.
\end{proof}

\bibliographystyle{abbrv}
\bibliography{few-colors}

\appendix

\section{On the Tightness of Corollary~\ref{cor:chiU(n,k)}}\label{app:1mod4}

We prove the following result.

\begin{theorem}\label{thm:U(n,k)_n=1mod4}
For all integers $n$ and $k$ such that $n$ is congruent to $1$ modulo $4$ and $n \geq 2k$,
\[ \chi(U(n,k)) = \min (n-2k+2, \lceil n/2 \rceil).\]
\end{theorem}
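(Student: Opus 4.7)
The upper bound $\chi(U(n,k)) \leq \min(n-2k+2, \lceil n/2 \rceil)$ is already supplied by Corollary~\ref{cor:chiU(n,k)}. Since $\widetilde U(n,k)$ is an induced subgraph of $U(n,k)$, Theorem~\ref{thm:chi_tilde_U} gives $\chi(U(n,k)) \geq \min(n-2k+2, \lfloor n/2 \rfloor)$, which already matches the upper bound whenever $n - 2k + 2 \leq \lfloor n/2 \rfloor$; for $n \equiv 1 \pmod 4$ this covers the range $n \leq 4k-5$. The only remaining case is therefore $n \equiv 1 \pmod 4$ with $n \geq 4k-3$, in which $\lceil n/2 \rceil \leq n-2k+2$, and the task reduces to showing that $U(n,k)$ is not properly $t$-colorable for $t = \lceil n/2 \rceil - 1 = (n-1)/2$. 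Writing $n = 4\ell + 1$, this becomes $t = 2\ell$.

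The strategy is to mimic the proof of Theorem~\ref{thm:chi_tilde_U} but in a cyclic setting. Specifically, I plan to establish a cyclic analog of Lemma~\ref{lemma:y_i_tilde_U}: there exist $y_1, \ldots, y_n \in \Sset^t$ such that for every hyperplane through the origin, at least one of its two open hemispheres contains $\{y_i \mid i \in A\}$ for some vertex $A$ of $U(n,k)$, i.e., some $k$-subset of $[n]$ containing two cyclically consecutive elements. Once this is in hand, the Borsuk--Ulam deduction transfers verbatim from the proof of Theorem~\ref{thm:chi_tilde_U}: a hypothetical proper $t$-coloring yields open sets $F_1, \ldots, F_t$ (each witnessing one color class) whose closed complement $F_{t+1}$ covers the remainder of $\Sset^t$, and Theorem~\ref{thm:BU} then produces an antipodal pair in some $F_j$, contradicting either the properness of the coloring (if $j \leq t$) or the hemisphere property of the $y_i$ (if $j = t+1$).

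For the construction I will use the trigonometric moment curve. Setting $\theta_i = 2\pi i / n$, define
\[ w_i = \bigl(1, \cos \theta_i, \sin \theta_i, \cos 2\theta_i, \sin 2\theta_i, \ldots, \cos \ell \theta_i, \sin \ell \theta_i\bigr) \in \R^{2\ell + 1}, \]
and set $y_i = w_i / \|w_i\| \in \Sset^{2\ell}$. For a hyperplane through the origin with unit normal $c = (c_0, a_1, b_1, \ldots, a_\ell, b_\ell)$, the sign of $\langle c, y_i\rangle$ equals the sign of the trigonometric polynomial $p(\theta) = c_0 + \sum_{j=1}^{\ell}(a_j \cos j\theta + b_j \sin j\theta)$ at $\theta = \theta_i$. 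Since a nonzero trigonometric polynomial of degree at most $\ell$ has at most $2\ell$ zeros in $[0, 2\pi)$, at most $|W_{\mathrm{on}}| \leq 2\ell = t$ of the $y_i$ lie on the hyperplane. Using the continuity perturbation from the proof of Lemma~\ref{lemma:y_i_tilde_U} (applied to $c$ rather than to the hyperplane in $\R^{t+1}$), I will arrange $|W_{\mathrm{on}}| = 2\ell$ with all $2\ell$ zeros of $p$ simple, so they partition the circle $[0, 2\pi)$ cyclically into $2\ell$ arcs on which $p$ alternates sign. The remaining $n - 2\ell = 2\ell + 1$ indices split into $W_1$ and $W_2$, each occupying exactly $\ell$ of these arcs.

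The concluding step is pigeonhole: since $|W_1| + |W_2| = 2\ell + 1$, the larger side, say $W_1$, has $|W_1| \geq \ell + 1$ indices distributed across $\ell$ arcs, so some arc contains two indices, which are cyclically consecutive in $[n]$. The hypothesis $n \geq 4k-3$ translates to $\ell + 1 \geq k$, so $|W_1| \geq k$ and one may select a $k$-subset of $W_1$ containing this consecutive pair; this is a vertex of $U(n,k)$ contained in a single open hemisphere, completing the construction. I expect the main obstacle to be the degeneracy bookkeeping in the perturbation step, namely ensuring that one can simultaneously force $|W_{\mathrm{on}}| = 2\ell$, replace any higher-multiplicity zero of $p$ by simple zeros, and move no index $y_i$ across the hyperplane in the process. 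An additional subtlety worth verifying is that the resulting consecutive pair may legitimately include the cyclic pair $\{n,1\}$, which is precisely the phenomenon that distinguishes $U(n,k)$ from $\widetilde U(n,k)$ and forces the switch from the polynomial moment curve of Lemma~\ref{lemma:y_i_tilde_U} to a trigonometric one.
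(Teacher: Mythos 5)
Your plan is correct in substance and arrives at the same theorem, but via a genuinely different geometric realization than the paper's. The paper proves Lemma~\ref{lemma:y_i_U_odd} by re-using the very same polynomial moment curve $\gamma(x)=(1,x,\ldots,x^t)$ from Lemma~\ref{lemma:y_i_tilde_U}; the key observation is that since $n\equiv 1\pmod 4$ forces $t$ to be \emph{even} (both when $t=n-2k+1$ and when $t=(n-1)/2$), the first and last of the $t+1$ intervals into which a generic hyperplane divides $\Image(\gamma)$ lie on the \emph{same} side, and so can be merged into a single ``wrap-around'' part. This leaves $t$ parts, $t/2$ per side, with the merged part correctly witnessing the cyclic pair $\{n,1\}$; pigeonhole then finishes exactly as in Lemma~\ref{lemma:y_i_tilde_U}. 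Your trigonometric moment curve makes the cyclic structure intrinsic and avoids the merging trick entirely, which is arguably more transparent; it also makes visible why the degree-$\leq 2\ell$ bound on trigonometric polynomial zeros is the right estimate. The trade-off is precisely the one you flag: the paper can cite Matou\v{s}ek's perturbation lemma (3.5.1) for the polynomial moment curve verbatim, whereas you would need its cyclic analog (that $c\in\Sset^t$ can be nudged so $p$ has exactly $2\ell$ simple zeros, all at points $\theta_i$, with no other $\theta_j$ crossing the hyperplane). That analog is true and standard for the trigonometric moment curve, but it is not a free citation; alternatively, a slightly more careful arc-count avoiding the perturbation is possible but requires handling even-multiplicity zeros of $p$ carefully. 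A minor structural difference: you split into $n\leq 4k-5$ (handled by $\widetilde U(n,k)\subseteq U(n,k)$ and Theorem~\ref{thm:chi_tilde_U}) and $n\geq 4k-3$ (handled directly), while the paper's Lemma~\ref{lemma:y_i_U_odd} covers both regimes uniformly with $t=\min(n-2k+2,\lceil n/2\rceil)-1$; both case analyses are correct. Overall, the paper's proof is a minimal departure from Lemma~\ref{lemma:y_i_tilde_U}, while yours swaps the curve and is conceptually cleaner at the cost of a perturbation subtlety that you correctly identified as the remaining work.
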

\noindent
Theorem~\ref{thm:U(n,k)_n=1mod4} shows that the upper bound in Corollary~\ref{cor:chiU(n,k)} is tight whenever $n$ is congruent to $1$ modulo $4$.
Its proof relies on the following lemma whose proof resembles that of Lemma~\ref{lemma:y_i_tilde_U}.

\begin{lemma}\label{lemma:y_i_U_odd}
For integers $n$ and $k$ for which it holds that $n$ is congruent to $1$ modulo $4$ and $n \geq 2k$, let $t = \min(n-2k+2,\lceil n/2 \rceil)-1$.
Then, there exist $n$ points $y_1,\ldots,y_n \in \Sset^t$ such that for every hyperplane $h$ in $\R^{t+1}$ that passes through the origin, at least one of the two open hemispheres that $h$ determines contains the points of $\{y_i \mid i \in A\}$ for some vertex $A$ of $U(n,k)$.
\end{lemma}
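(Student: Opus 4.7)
The plan is to mimic the proof of Lemma~\ref{lemma:y_i_tilde_U}, replacing the rational moment curve $\gamma(x) = (1,x,\ldots,x^t)$, which suits a linear order, by a \emph{trigonometric} moment curve, which suits the cyclic order on $[n]$. The assumption $n \equiv 1 \pmod{4}$ enters through a parity check: since $n$ is odd, both $n-2k+2$ and $\lceil n/2 \rceil = (n+1)/2$ are odd, hence $t+1$ is odd and I may write $t = 2s$ for a non-negative integer $s$.

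Concretely, I set $\theta_j = 2\pi j/n$ and
\[ v_j = (1,\cos\theta_j,\sin\theta_j,\cos 2\theta_j,\sin 2\theta_j,\ldots,\cos s\theta_j,\sin s\theta_j) \in \R^{2s+1} = \R^{t+1}, \]
and take $y_j = v_j/\|v_j\|$ on $\Sset^t$. For a hyperplane given by $x = (a_0,a_1,b_1,\ldots,a_s,b_s) \in \Sset^t$, the sign of $\langle x,v_j\rangle$ coincides with that of the trigonometric polynomial $p(\theta) = a_0 + \sum_{\ell=1}^s (a_\ell\cos\ell\theta + b_\ell\sin\ell\theta)$ at $\theta_j$. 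Such a $p$ has degree at most $s$ and hence at most $2s = t$ zeros in $[0,2\pi)$. By a standard perturbation argument (in the spirit of~\cite[Lemma~3.5.1]{MatousekBook}), I may assume that no $y_j$ lies on the hyperplane and that $p$ has exactly $2s$ simple zeros. These zeros partition the circle into $2s$ arcs of alternating sign, and because the zero count on a closed curve is forced to be even, the arcs split \emph{evenly} into $s$ positive and $s$ negative. Each arc contains a contiguous cyclic block of indices (since $j \mapsto \theta_j$ is order-preserving around the circle), so any arc with two or more indices supplies a pair of cyclically consecutive elements.

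Split $[n] = W_1 \cup W_2$ by the sign of $\langle x,v_j\rangle$. Then $|W_1| + |W_2| = n - t$, so the larger side, say $W_1$, satisfies $|W_1| \geq \lceil (n-t)/2 \rceil$ while having at most $s$ arcs on its side. Writing $n = 4m+1$ and checking the two regimes of the $\min$ defining $t$: when $t = n-2k+1$ (so $k \geq m+1$) we have $\lceil (n-t)/2 \rceil = k$ and $s = 2m+1-k < k$, and when $t = 2m$ (so $k \leq m+1$) we have $\lceil (n-t)/2 \rceil = m+1$ and $s = m < m+1$. In either regime, $|W_1| \geq k$ and $|W_1| > s$, so pigeonhole forces two indices of $W_1$ to share an arc, yielding a pair $\{j, j{+}1 \bmod n\} \subseteq W_1$ of cyclically consecutive elements. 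Extending this pair to any $k$-subset $A \subseteq W_1$ gives a vertex of $U(n,k)$ whose image lies in the open hemisphere $\{\langle x,\cdot\rangle > 0\}$, as required.

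The main obstacle is the arc-counting bookkeeping: the gain of one color compared to Lemma~\ref{lemma:y_i_tilde_U} (the ceiling versus the floor of $n/2$) relies precisely on the fact that on a closed curve a degree-$s$ trigonometric polynomial's $2s$ zeros divide the circle into $2s$ arcs which split evenly as $s$ and $s$, whereas on a line the $t$ zeros of a degree-$t$ polynomial yield $t+1$ arcs that force an uneven split. The congruence $n \equiv 1 \pmod{4}$ is exactly what guarantees that the $t+1$ appearing in the statement is odd, so that this even-split advantage is available precisely when we want $t = 2s$ to be even.
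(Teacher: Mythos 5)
Your approach is sound and genuinely different from the paper's: the paper reuses the \emph{polynomial} moment curve of Lemma~\ref{lemma:y_i_tilde_U} and then exploits the evenness of $t$ through a merging trick (the first and last of the $t+1$ pieces of $\Image(\gamma)$ lie on the same side and are fused into one cyclic block), whereas you switch to the \emph{trigonometric} moment curve, which lives on a circle and so reflects the cyclic order on $[n]$ natively. The parity discussion is the same in spirit: in the paper $t$ even lets the first and last pieces merge; in your version $t=2s$ even is exactly what makes a degree-$s$ trigonometric polynomial's zero set produce an even split of $s$ arcs per sign. Both proofs then finish by the same pigeonhole. This is a clean alternative route.

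There is, however, a bookkeeping inconsistency that you should fix because it touches a real subtlety. You perturb so that ``no $y_j$ lies on the hyperplane,'' yet two lines later you compute $|W_1|+|W_2| = n-t$ and use $\lceil (n-t)/2 \rceil$. These cannot both be right: if no $y_j$ lies on $h$, then $|W_1|+|W_2| = n$. More importantly, the ``push everything off $h$'' version of the perturbation does not prove the lemma as stated. If the original $h$ passes through some $y_j$, a perturbed $h'$ will place that $y_j$ strictly on one side, and a set $A$ that you find in an open hemisphere of $h'$ may contain such an index $j$ --- in which case $\{y_i : i \in A\}$ lies only in a \emph{closed} hemisphere of the original $h$, not an open one. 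The correct move is the one the paper uses (and the one your arithmetic already presupposes): perturb $h$ so that \emph{exactly} $t=2s$ of the $y_j$'s lie on $h$ while no $y_j$ crosses from one open side to the other, in the spirit of~\cite[Lemma~3.5.1]{MatousekBook}. Then $W_{on}$ has size $t$, the trigonometric polynomial of degree at most $s$ has all $2s$ of its zeros at those $\theta_j$ with $j\in W_{on}$, so it has no other zeros, the $2s$ arcs between consecutive points of $W_{on}$ alternate sign and split $s$ and $s$, and your pigeonhole with $\lceil (n-t)/2 \rceil > s$ and $\lceil (n-t)/2 \rceil \geq k$ goes through exactly as you computed. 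Crucially, since the perturbation creates no crossings, the vertex $A$ you produce lies in an open hemisphere of the original $h$, as required.
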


\begin{proof}
Let $n$, $k$, and $t$ be integers as in the statement of the lemma.
Observe that the assumption that $n$ is congruent to $1$ modulo $4$ implies that $t$ is even.
Let $\gamma: \R \rightarrow \R^{t+1}$ denote the function defined by $\gamma(x) = (1,x,x^2,\ldots,x^t)$.
For every $i \in [n]$, consider the point $w_i = \gamma(i) \in \R^{t+1}$.
It suffices to show that for every hyperplane $h$ in $\R^{t+1}$ that passes through the origin, at least one of the two open half-spaces that $h$ determines contains the points of $\{w_i \mid i \in A\}$ for some vertex $A$ of $U(n,k)$.

Consider an arbitrary hyperplane $h$ in $\R^{t+1}$ that passes through the origin.
Every point $w_i$ either lies on $h$ or belongs to one of the two open half-spaces determined by $h$.
Let $W_{on}$ denote the set of indices $i \in [n]$ for which the point $w_i$ lies on $h$, and let $W_1$ and $W_2$ denote the sets of indices $i \in [n]$ of the points $w_i$ that belong to the two open half-spaces determined by $h$. Our goal is to show that at least one of the sets $W_1$ and $W_2$ contains a vertex of $U(n,k)$.

The definition of the points $w_1,\ldots,w_n$ implies that the size of $W_{on}$ does not exceed the number of roots of some nonzero polynomial of degree at most $t$, hence
$|W_{on}| \leq t$, and it can be assumed that $|W_{on}|=t$.
The points $w_i$ with $i \in W_{on}$ divide the image $\Image(\gamma)$ of the function $\gamma$ into $t+1$ open continuous parts that alternate between the two open half-spaces determined by $h$. Since $t$ is even, the first and last parts lie on the same open half-space determined by $h$. We merge these two parts into a single part.
Observe that all the indices $i$ of the points $w_i$ that belong to each of the obtained $t$ parts of $\Image(\gamma)$ are either in $W_1$ or in $W_2$.
It follows that each of the sets $W_1$ and $W_2$ is associated with $t/2$ of these parts.

Suppose without loss of generality that $|W_1| \geq |W_2|$.
We claim that $W_1$ contains a vertex of $U(n,k)$. To this end, we show that it includes two consecutive elements modulo $n$ and has size at least $k$.
Indeed, it holds that $|W_1| > t/2$, as otherwise $n-t = |W_1|+|W_2| \leq t$, which implies that $n \leq 2 t$. Since $n$ is odd, this contradicts the definition of $t$ which guarantees that $t \leq \lceil n/2 \rceil -1$. It thus follows that at least one of the parts of $\Image(\gamma)$ associated with $W_1$ contains at least two of the points $w_1,\ldots, w_n$, hence $W_1$ includes two consecutive elements modulo $n$.
Further, by the definition of $t$, we have $t \leq n-2k+1$, which implies that $|W_1| \geq \lceil \frac{n-t}{2} \rceil \geq \lceil \frac{2k-1}{2} \rceil = k$, completing the proof.
\end{proof}

\begin{proof}[ of Theorem~\ref{thm:U(n,k)_n=1mod4}]
Let $n$ and $k$ be integers such that $n$ is congruent to $1$ modulo $4$ and $n \geq 2k$.
The upper bound on $\chi(U(n,k))$ follows from Corollary~\ref{cor:chiU(n,k)}.
For the lower bound, let \[t = \min (n-2k+2, \lceil n/2 \rceil) - 1,\]
and suppose for the sake of contradiction that there exists a proper coloring of $U(n,k)$ with $t$ colors. Let $y_1,\ldots,y_n \in \Sset^t$ denote the points given by Lemma~\ref{lemma:y_i_U_odd}. We define $t$ sets $F_1,\ldots,F_t \subseteq \Sset^t$ as follows. A point $x \in \Sset^t$ is included in $F_j$ with $j \in [t]$ if there exists a vertex $A$ of $U(n,k)$ colored $j$ such that $\{ y_i \mid i \in A\} \subseteq H(x)$, where $H(x)$ is the open hemisphere centered at $x$. We further define $F_{t+1} = \Sset^t \setminus (F_1 \cup \cdots \cup F_t)$. Note that the sets $F_1,\ldots,F_{t+1}$ cover $\Sset^t$.
By Theorem~\ref{thm:BU}, there exist an index $j \in [t+1]$ and a point $x \in \Sset^t$ such that both $x$ and $-x$ belong to $F_j$.
If $j \in [t]$, then it follows from the definition of $F_j$ that there exist two vertices of $U(n,k)$ with color $j$ that correspond to disjoint sets, contradicting the assumption that the given coloring is proper.
If $j = t+1$ then neither $H(x)$ nor $H(-x)$ contains $\{ y_i \mid i \in A\}$ for a vertex $A$ of $U(n,k)$, contradicting Lemma~\ref{lemma:y_i_U_odd}.
This completes the proof.
\end{proof}


\end{document}